\def\Id{{\openone}}
\newcommand{\be}{\begin{equation}}
\newcommand{\ee}{\end{equation}}
\newcommand{\bea}{\begin{eqnarray}}
\newcommand{\eea}{\end{eqnarray}}
\newcommand{\E}{\mathcal{E}}
\newcommand{\1}{\mathbbm{1}}
\newtheorem{thm}{Theorem}
\newtheorem{prop}[thm]{Proposition}
\newtheorem{cor}[thm]{Corollary}
\newtheorem{lem}[thm]{Lemma}
\newtheorem{defn}[thm]{Definition}
\newcommand{\ket}[1]{\vert#1\rangle}
\newcommand{\tr}{\mathrm{tr}}
\newcommand{\mc}[1]{\mathcal{#1}}
\begin{document}

\title{Robustness in Projected Entangled Pair States}

\author{J.~Ignacio \surname{Cirac}}
\affiliation{Max-Planck-Institut f{\"{u}}r Quantenoptik,
Hans-Kopfermann-Str.\ 1, D-85748 Garching, Germany}

\author{Spyridon Michalakis}
\affiliation{Institute for Quantum Information and Matter, Caltech, 91125 Pasadena, CA, U.S.A.}

\author{David P\'erez-Garc\'\i a}
\affiliation{Dpto.\ An\'alisis Matem\'atico and IMI,
Universidad Complutense de Madrid, 28040 Madrid, Spain}

\author{Norbert Schuch}
\affiliation{Institute for Quantum Information, RWTH Aachen University,
D-52056 Aachen, Germany}

\begin{abstract}
We analyze a criterion which guarantees that the ground states
of certain many body systems are stable under perturbations. Specifically,
we consider PEPS, which are believed to provide an efficient description, based on local tensors, for the low energy physics arising
from local interactions.  In order to assess stability in the framework of
PEPS, one thus needs to understand how physically allowed perturbations of
the local tensor affect the properties of the global state.  In this
paper, we show that a restricted version of the Local Topological Quantum
Order (LTQO) condition~\cite{michalakis:local-tqo-ffree} provides a
checkable criterion which allows to assess the stability of local
properties of PEPS under physical perturbations.  We moreover show that
LTQO itself is stable under perturbations which preserve the spectral gap,
leading to nontrivial examples of PEPS which possess LTQO and are thus
stable under arbitrary perturbations.  \end{abstract}

\maketitle

\section{Introduction}

In studying model Hamiltonians for condensed matter systems, it is
essential to understand the conditions which guarantee that properties
behave nicely under small perturbations, as this allows to use the model
to predict the behavior of actual physical systems.  In the context of
zero-temperature physics, this amounts to understanding the conditions
under which certain physical properties of the ground state change
smoothly under perturbations to the Hamiltonian.  While this question is
very hard to answer in general, a proof of stability under arbitrary
perturbations has recently been given for frustration-free
Hamiltonians~\cite{michalakis:local-tqo-ffree}, based on two conditions  (LTQO and local gap),
following up on earlier work on commuting
Hamiltonians~\cite{bravyi:tqo-long,bravyi:local-tqo-simple}. However, the
LTQO condition, and especially the local gap condition, are very hard
to check in practice, and so far, no examples beyong commuting
Hamiltonians fulfilling these properties have been devised.

Projected Entangled Pair States (PEPS) provide a local description of
quantum many-body states based on their entanglement structure, and thus
in a natural way embody the physics of local gapped Hamiltonians. PEPS can
be used as a framework to understand the physics of many-body systems
based on the state (similar as e.g.\ the Laughlin
wavefunction), in particular since to any PEPS, a local \emph{parent
Hamiltonian} can be associated.  In the case of translational invariant
systems, the state is described by a single local tensor, and
understanding any property of the system can be mapped to studying a
corresponding property of this tensor.  In this way, PEPS have been very
successful in understanding otherwise intractable questions, such as the
characterization of topological order from local
symmetries~\cite{schuch:peps-sym}, the way in which global symmetries
emerge locally~\cite{perez-garcia:inj-peps-syms}, or the characterization
of quantum phases without and with symmetries in one
dimension~\cite{pollmann:1d-sym-protection-prb,chen:1d-phases-rg,schuch:mps-phases}
and beyond~\cite{chen:2d-spt-phases-peps-ghz}, just to name a few. To
assess how general these result are, it is therefore important to
identify the conditions under which PEPS are robust to perturbations.
Given the state-centered perspective of the PEPS framework, we are
particularly interested in those ``natural'' perturbations to the state
which both correspond to a perturbation of the local tensor, and at the
same time can be understood as arising from a perturbation of the parent
Hamiltonian.  Unfortunately, the powerful tools available to assess these
questions in one dimension cannot be applied to two-dimensional systems,
leaving the stability of PEPS in 2D and beyond an open problem.

In this paper, we study the robustness of PEPS under natural perturbations
and show that the LTQO condition, when restricted to specific observables
or regions, allows to prove stability of physical properties for those
observables or regions.  In the context of PEPS, this restricted version
of LTQO has several advantages: On the one hand, it allows to check the
stability of local observables, their derivatives, and correlation
functions under natural perturbations. On the other hand, as it relies
only on the properties of specific operators or regions, it can be
verified numerically by reducing it to an eigenvalue problem. Finally, in
the PEPS framework with its state-centered perspective, there is no need
to additionally check spectral properties of the Hamiltonian, thereby
avoiding this particularly difficult task.  While the motivation for this
work stems from the framework of PEPS, the stability result as such is
independent of PEPS and can be used to assess stability of general quantum
states against a class of physically motivated perturbations.

This paper is structured as follows: In Section II, we introduce the PEPS
formalism and discuss which types of perturbations are natural in the
context of PEPS and parent Hamiltonians.  In Section III, we introduce the
restricted LTQO condition and prove that systems which satisfy LTQO
w.r.t.\ certain observables or regions exhibit robustness against
perturbations. In Section IV, we discuss how the restricted LTQO condition
can be verified for PEPS. We close in Section V by showing that LTQO
itself is stable under perturbations, which in turn allows us to construct
the first examples verifying LTQO without commuting Hamiltonians. In an
appendix, we give a proof that injective Matrix Product States (MPS) satisfy LTQO, implying that they are
stable against general perturbations.

\section{PEPS}

In this section, we introduce the formalism of PEPS and their associated parent Hamiltonians, and define the natural perturbations within this framework.

\subsection{Definition}

We start by recalling the definition and basic properties of PEPS. For the sake of simplicity of the exposition we will concentrate on translationally invariant PEPS $|\Psi\rangle$ on a square lattice. Each PEPS is characterized by a tensor $A\equiv A^{s}_{\alpha,\beta,\gamma,\delta}$ (with a physical index $s=1,\ldots,d$ representing the spin states on a single site, and auxiliary indices $\alpha,\beta,\gamma,\delta=1,\dots,D$), such that $\langle s_1,\ldots,s_N|\Psi\rangle$ is determined by associating the tensor $A^{s_n}$ to each spin $n$, and contracting the auxiliary indices connected by the lattice, as shown in Fig.~\ref{Fig:PEPS3}. For periodic boundary conditions, we also contract the indices on the right boundary with those on the left and the ones pointing up with those down. For open
boundary conditions we can, for instance, set the auxiliary indices at the boundary to a fixed value.

\subsection{Parent Hamiltonian
\label{ssec:parentham}}

\begin{figure}[t]
  % Requires \usepackage{graphicx}
\includegraphics[width=1\columnwidth]{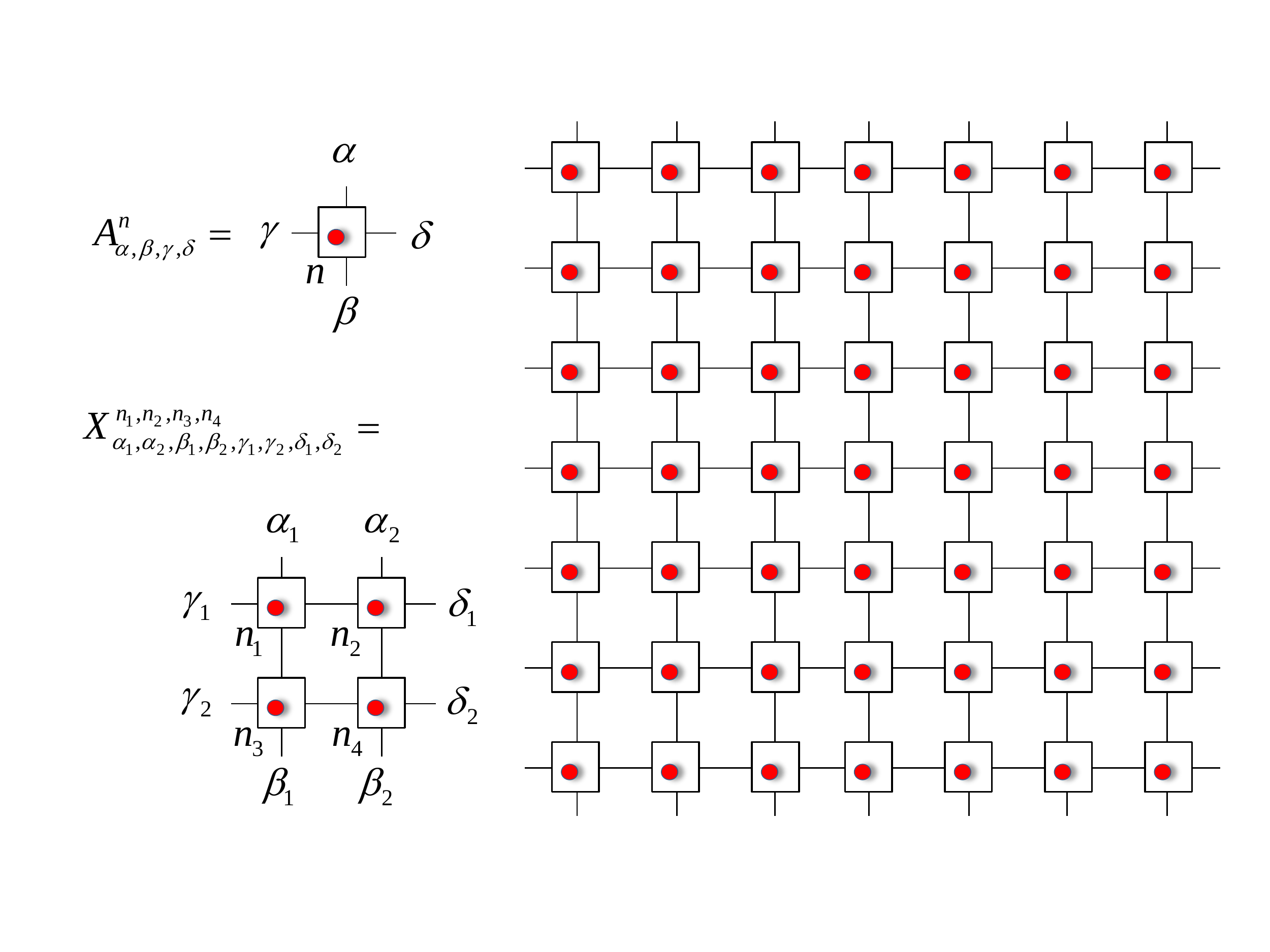}\\
\caption{
Top left: The tensor $A$ has a physical index $s$ (represented by a red point)
and four auxiliary indices, $\alpha,\beta,\gamma,\delta$. The PEPS is
built by contracting all the physical indices along the lattice, as it is
represented by the drawing on the right. At the bottom left, we have the
definition of the tensor $X$ as a contraction of four tensors $A$ along
the auxiliary indices used to define the parent Hamiltonian. 
\label{Fig:PEPS3} 
}
\end{figure}

PEPS are ground states of local frustration free Hamiltonians, which are called parent Hamiltonians. Given a PEPS $|\Psi\rangle$, a parent Hamiltonian has the form: 
\begin{equation*}
 H=\sum_k h_k,
 \end{equation*}
where each $h_k\ge 0$ acts on a finite region $k$ of the lattice. In order to understand the form of $h_k$, we introduce a subspace ${\cal K}_k$ corresponding to the spins in that region. This is the space spanned by all states $|\phi_X\rangle$ which are generated by contracting the tensors $A$ on region $k$, while assigning all possible values to the external auxiliary indices. For instance, if region $k$ is a plaquette composed of $2\times 2$ spins (see Fig. \ref{Fig:PEPS3}), then:
\begin{equation*}
 {\cal K}_k = \mathrm{span}\left\{ |\phi_{\alpha_1\alpha_2\beta_1\beta_2\gamma_1\gamma_2\delta_1\delta_2}\rangle,\;1\le \alpha_i,\beta_i,\gamma_i,\delta_i\le D
 \right\}\,,
 \end{equation*}
where $|\phi_{\alpha_1\alpha_2\beta_1\beta_2\gamma_1\gamma_2\delta_1\delta_2}\rangle=$
 \begin{equation*}
\sum_{n_1,n_2,n_3,n_4} X_{\alpha_1,\alpha_2,\beta_1,\beta_2,\gamma_1,\gamma_2,\delta_1,\delta_2}^{n_1,n_2,n_3,n_4}|n_1n_2n_3n_4\rangle.
 \end{equation*}
For $H$ to be a parent Hamiltonian, ${\cal K}_k$ must coincide with the kernel of $h_{k}$. Furthermore, there must be a way of `growing' the regions $k$ step by step, such that: \emph{(i)}~at each step of joining the spins in neighboring regions, the so-called intersection property is fulfilled;\footnote{The intersection property can be deduced from the injectivity, or more generally, $G$-injectivity \cite{perez-garcia:parent-ham-2d,schuch:peps-sym} property of  a PEPS, but it is less restrictive, as shown for instance in Ref.~\cite{perez-garcia:parent-ham-2d}.} \emph{(ii)} the procedure terminates with a single region containing all spins. The intersection property
simply states that if we join two regions $k_1$ and $k_2$ that intersect in some region $k$ (see
Fig. \ref{Fig:PEPS2}), then:
\begin{equation*}
 \left[{\cal K}_{k_1}\otimes {\cal H}_{k_2 \setminus k}\right]
 \cap
 \left[{\cal H}_{k_1\setminus k}\otimes {\cal K}_{k_2}\right] = {\cal K}_{k_1 
 k_2}\ .
\end{equation*}
Figure \ref{Fig:PEPS2} gives an example of such a construction, where regions are composed of all possible square plaquettes, and  regions $k_1$ and $k_2$ are overlapping plaquettes with two spins in common.

\begin{figure}[t]
  % Requires \usepackage{graphicx}
\includegraphics[width=1\columnwidth]{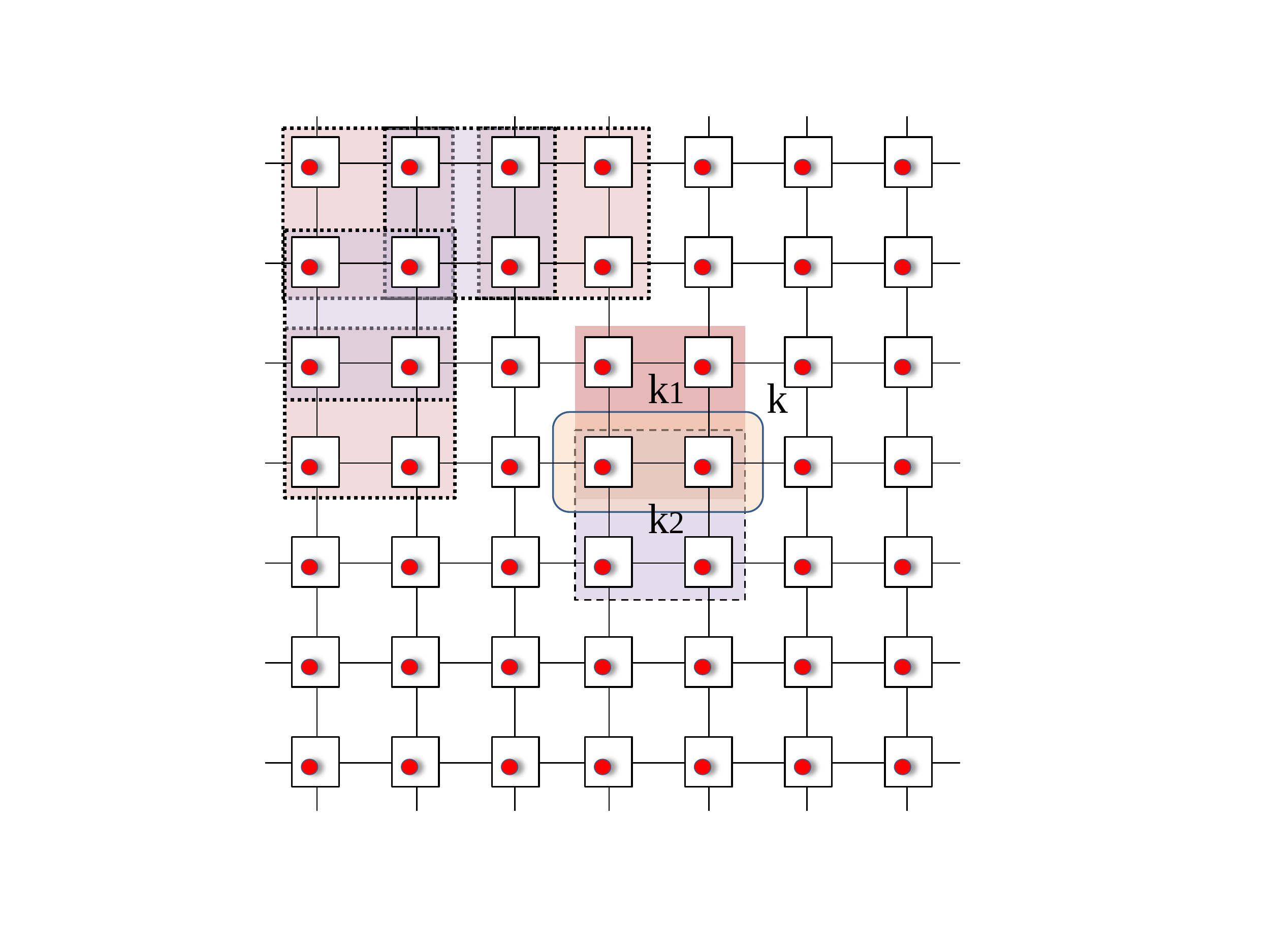}\\
\caption{Construction of parent Hamiltonians for PEPS: The operator
$h_{k_1}$ acts on a region $k_1$ (red region), and its kernel is
spanned by all states that can be obtained by contracting the tensors $A$
on that region with arbitrary boundary conditions on the outgoing
auxiliary indices. We can grow the regions by joining two regions, $k_1$
and $k_2$ (inside the dashed line) that intersect in $k$ (solid line). We
can also block 4 neighboring spin as indicated by the red regions to form
larger spins.}\label{Fig:PEPS2}
\end{figure}

The fact that ${\cal K}_{k}$ coincides with the kernel of $h_k$ ensures that we have a frustration free Hamiltonian, i.e., $h_k|\Psi\rangle=0$. The other properties related to the growth ensure that for any region obtained in the intermediate steps, the groundstate subspace of the part of $H$ acting on that region is spanned by the tensors making up the PEPS if we contract them with arbitrary tensors at the boundary of that region; furthermore, the reduced state of $|\Psi\rangle$ on that region will be supported in that subspace.\footnote{This way, we keep the states $|\psi_b\rangle$, considered below in the definition of LTQO, under control.}

Note that there may be many parent Hamiltonians for a PEPS. On the one hand, any Hamiltonian built up from $\tilde h_k \ge 0$ which are fully supported in ${\cal K}_k^{\perp}$ will be a parent Hamiltonian, and on the other hand, we can choose different regions $k$ (for instance, containing a larger number of spins) to build $H$. The intersection property ensures that no unwanted states appear in the groundstate subspace once we have generated the full Hamiltonian.

\subsection{Perturbations}

The main aim of this paper is the study of the stability of properties of physical systems under perturbations, using the framework of PEPS.  In what follows, we introduce the class of natural perturbations we will study in the context of PEPS.

The most obvious way of perturbing a PEPS is by replacing each tensor in the following way:
 \begin{equation}
 \label{pert-PEPS}
A \to A(\epsilon)=A+\epsilon C\ , \quad \|C\|=1,
\end{equation}
possibly after blocking regions of tensors. Yet, such a perturbation can
lead to a discontinuity in the parent
Hamiltonian~\cite{fernandez:1d-uncle,fernandez-gonzalez:uncle-long}, and
is therefore \emph{unphysical}. Therefore, we will restrict
Eq.~(\ref{pert-PEPS}) to the following class of natural perturbations,
which can be understood as arising from a continuous perturbation of the
parent Hamiltonian.\footnote{We conjecture, based on the intuition built
in \cite{fernandez:1d-uncle,fernandez-gonzalez:uncle-long}, that the
natural perturbations defined in Definition \ref{thm2.3} are exactly those
of the type (\ref{pert-PEPS}) with the added constraint that they lead to
a continuous change in the parent Hamiltonian. Beyond particular classes
of PEPS, like injective ones --where the result is trivial--, or MPS
--where one may use the canonical form \cite{perez-garcia:mps-reps} to
prove it--, we do not have at this point a complete result
connecting~\eqref{pert-PEPS} with \eqref{eq:naturalpert}.}

\begin{defn}\label{thm2.3}
The \emph{natural perturbations} of a PEPS $|\Psi\rangle$ are 
those obtained by applying operators $R(\epsilon)$, with
$\lim_{\epsilon\rightarrow 0}R(\epsilon)=\openone$, 
to fully covering, but non-overlapping, regions of bounded size, i.e.
\begin{equation}
 \label{eq:naturalpert}
 |\Psi(\epsilon)\rangle= R(\epsilon)^{\otimes N'}|\Psi\rangle\ ,
\end{equation}
where $N'$ is the number of regions. 
\end{defn}

To motivate our choice, note that perturbations of the form \eqref{eq:naturalpert} can be understood as arising from a smooth perturbation of the parent Hamiltonian~\cite{schuch:mps-phases,schuch:rvb-kagome}. To see why, start from the frustration-free parent Hamiltonian
$
H=\sum_k h_k
$
for $\ket\Psi$ and let:
\begin{equation}
\label{eq:pert-Ham}
h_k(\epsilon)=((R(\epsilon)^{-1})^{\otimes\kappa})^\dagger 
h_k((R(\epsilon)^{-1})^{\otimes\kappa})\ ,
\end{equation}
where the product $\otimes\kappa$ goes over the sites on which $h_k$ acts.
Note that $R(\epsilon)$ is invertible for small enough $\epsilon$, and
$\lim_{\epsilon \rightarrow 0} h_k(\epsilon)=h_k$.
Then,
\[
H(\epsilon)=\sum_k h_k(\epsilon)
\]
satisfies $h_k(\epsilon)\ge0$ and $h_k(\epsilon)\ket{\Psi(\epsilon)}=0$, i.e., $\ket{\Psi(\epsilon)}$ is a ground state of the frustration-free Hamiltonian $H(\epsilon)$. Indeed, $H(\epsilon)$ is a parent Hamiltonian for $\ket{\Psi(\epsilon)}$: The $h_k(\epsilon)$ have kernels $\mathcal K_k(\epsilon)=R(\epsilon)^{\otimes \kappa}\mathcal K_k$, which -- since $R(\epsilon)$ is invertible -- satisfy the conditions required for parent Hamiltonians discussed in Sec.~\ref{ssec:parentham}.

Note that the construction for $H(\epsilon)$ does not rely on $\ket{\Psi}$ being a PEPS, but only on $H$ being frustration-free, thus our notion of natural perturbations applies to all frustration-free systems.

\section{The LTQO condition}\label{setup}

In this section, we recall the LTQO condition of \cite{michalakis:local-tqo-ffree} and define its restriction to particular observables and regions, which will be the desired checkable property ensuring
stability in the context of PEPS without any spectral assumption. Since
the condition can be introduced and analyzed for general systems (out of
the context of PEPS), we will do so for the sake of generality.

We consider a spin lattice, $X$, in arbitrary spatial dimension, with
corresponding Hilbert space ${\cal H}_X$. We will consider connected
regions of the lattice, $B$, with smooth boundaries, $\partial B$ (see
Fig.~\ref{Fig:Spaces}), and denote by $|B|$ the number of lattice points
in that region and by ${\cal H}_B$ the corresponding Hilbert space for the
spins. We assume a short-range, translationally invariant and frustration-free Hamiltonian, $H_X$, acting on the lattice. We are interested in the properties of the groundstate subspace ${\cal S}_X \in {\cal H}_X$ in the limit
$|X|\to \infty$.

We can write $H_X=H_B+H_{X\setminus B}+H_{\partial B}$, where $H_Y$
includes the terms of $H_X$ acting on $Y$, and denote by ${\cal S}_B\subset
{\cal H}_B$ the groundstate subspace of $H_B$. Note that since $H_X$ is
frustration-free, all states in ${\cal S}_X$ are spanned by vectors in
${\cal S}_B\otimes{\cal S}_{X\setminus B}$.

\begin{figure}[t]
  % Requires \usepackage{graphicx}
\includegraphics[width=1\columnwidth]{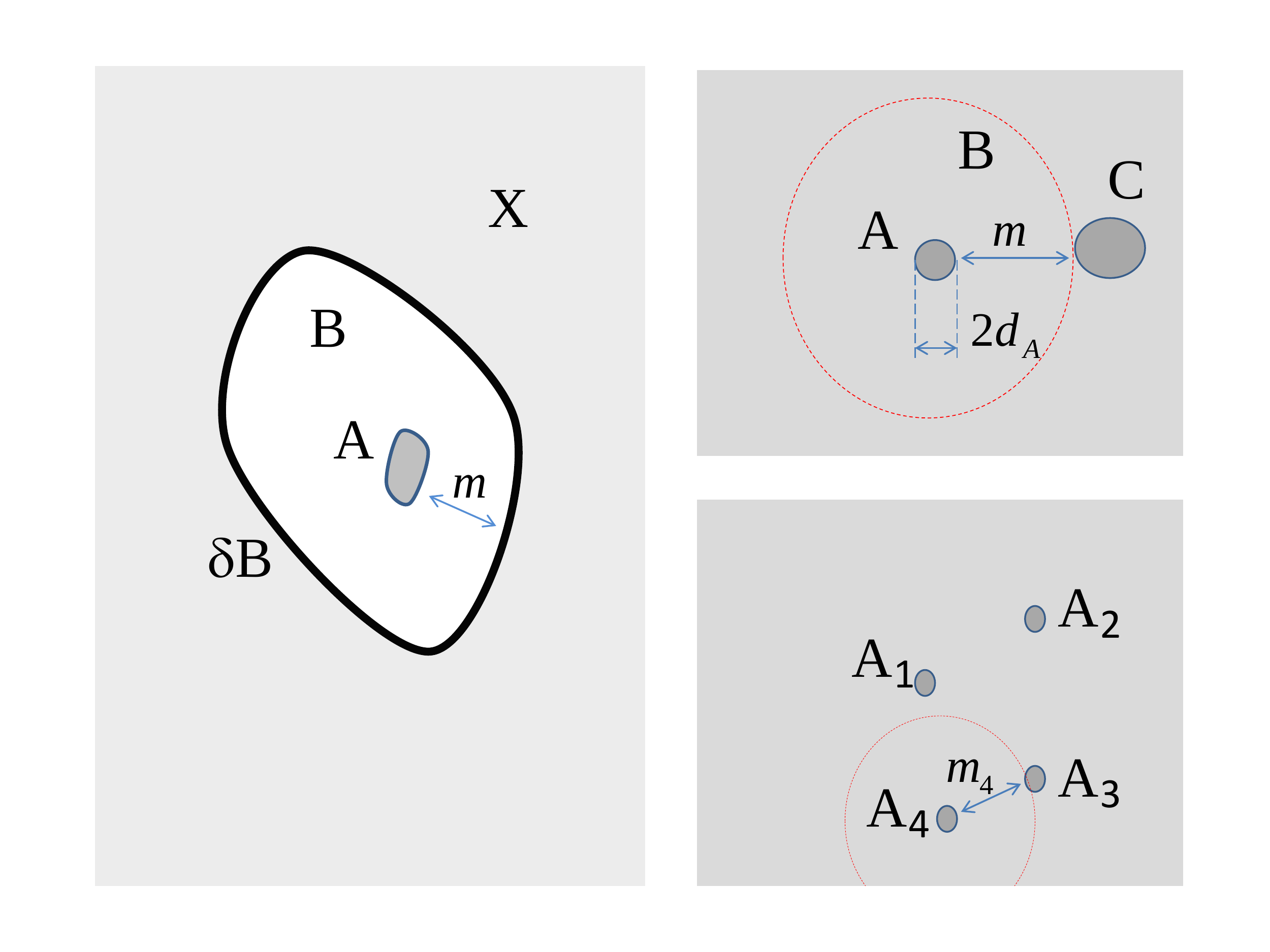}\\
\caption{Left: Setup for the definition of LTQO
(Definition~\ref{Cond22}).
Right: Setup for the decay of two-body correlations,
Proposition~\ref{prop:corr} (top), and many-body correlations,
Proposition~\ref{prop:manycorr} (bottom).
}\label{Fig:Spaces} 
\end{figure}

The LTQO property is related to the sensitivity of local observables to changes in the groundstate far away from the region on which the observable acts. To define it, we divide the lattice $X$ into regions $A\subset B \subset X$, with $A$ and $B$ connected and finite, and denote by $m$ the distance between $A$ and $\partial B$ (see Fig.~\ref{Fig:Spaces}).

\begin{defn}\label{Cond22} We say that a {\em region} $A$ satisfies LTQO, if for all observables $O_a$ supported on $A \subset X$, all regions $B \subset X$ with $A \subset B$ and groundstates $\ket{\Psi_x}\in \mathcal S_X$, and
$\ket{\Psi_b}\in \mathcal S_B$, the following bound holds:
\begin{equation}
 \label{cond22}
 \Big|\frac{\langle \Psi_x | O_a |\Psi_x\rangle} {\langle\Psi_x |\Psi_x\rangle}
 - \frac{\langle \Psi_b|O_a|\Psi_b\rangle}{\langle \Psi_b|\Psi_b\rangle}
\Big| \le \|O_a\| f_A(m)\ ,
 \end{equation}
where $m$ is the distance between $A$ and $\partial B$ (cf.~Fig.~\ref{Fig:Spaces}), and $f_A(m)$ decays superpolynomially in $m$,
i.e.,
\[
\lim_{m\rightarrow\infty} p(m)f_A(m) = 0,
\]
for all polynomials $p(m)$.

We say that a particular {\em observable} satisfies LTQO if it verifies (\ref{cond22}). 

We finally say that a {\em system} satisfies LTQO is all its regions $A$ satisfy it and the function $f$ in (\ref{cond22}) is independent of $A$.
\end{defn}

LTQO for a system was introduced in \cite{michalakis:local-tqo-ffree}. Definition \ref{Cond22} adds its specialization for particular regions and observables, which will be crucial for this paper.

Since we are assuming translational invariance, the exact position of region $A$ on the lattice does not play any role; only its shape matters. Moreover, since LTQO is inherited by subregions, one may restrict to regular shapes, like spheres or cubes. For the purposes of this work, we may think of region $A$ as a single site. 

\begin{prop} The following three are equivalent defintions for LTQO, up to a possible prefactor in $f_A(m)$:
\begin{enumerate}[leftmargin=0cm,itemindent=0.5cm,labelwidth=\itemindent]
\item
For all $\ket{\Psi_b},\ket{\Psi_b'}\in{\cal S}_B$, 
\begin{equation}
 \label{cond22c}
 \Big|\frac{\langle \Psi_b' | O_a |\Psi_b'\rangle} {\langle\Psi_b' |\Psi_b'\rangle}
 - \frac{\langle \Psi_b|O_a|\Psi_b\rangle}{\langle \Psi_b|\Psi_b\rangle}
\Big| \le \|O_a\| f_{A}(m)\ .
\end{equation}
\item For all density operators $\rho_{x}$ and $\rho_b$ supported in ${\cal
S}_{X}$ and $\mc S_B$, respectively,
\begin{equation}
 \label{cond22a}
 \Big|\tr(\rho_x O_a)
 - \tr(\rho_b O_a) \Big| \le \|O_a\| f_{A}(m)\ .
\end{equation}
\item
With $P_B$ the projector onto ${\cal S}_B$ and
$c(O_a)=\frac{\tr(P_BO_a)}{\tr(P_B)}$, 
\begin{equation}
  \label{cond22d}
 \Big\|P_BO_aP_B-c(O_a)P_B\Big\| \le \|O_a\| f_{A}(m)\ .
\end{equation}
\end{enumerate}
Moreover, the following slightly weaker condition is implied by LTQO:
\begin{enumerate}[leftmargin=0cm,itemindent=0.5cm,labelwidth=\itemindent]
\item[4.]
For all operators $Z$ acting on $X\!\setminus\! B$, and all 
$\ket{\Psi_x}\in{\cal S}_X$,
\begin{equation}
 \label{cond22b}
 \Big|\frac{\langle \Psi_x | O_a |\Psi_x\rangle} {\langle\Psi_x |\Psi_x\rangle}
 - \frac{\langle \Psi_x'|O_a|\Psi_x'\rangle}{\langle
\Psi_x'|\Psi_x'\rangle} \Big| \le \|O_a\| f_{A}(m)\ ,
\end{equation}
where we have defined $|\Psi'_x\rangle = Z|\Psi_x\rangle$.
\end{enumerate}
\end{prop}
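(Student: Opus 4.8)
The plan is to collapse all four conditions onto a single scalar, namely the spread of $O_a$ over the local ground space. Concretely, I would set $M_a:=P_BO_aP_B$, let $\lambda_{\max}\ge\lambda_{\min}$ be its extreme eigenvalues on ${\cal S}_B$, and put $\Delta_a:=\lambda_{\max}-\lambda_{\min}$. The first observation to record is that for any nonzero $\ket{\Psi_b}\in{\cal S}_B$ one has $\langle\Psi_b|O_a|\Psi_b\rangle/\langle\Psi_b|\Psi_b\rangle=\langle\Psi_b|M_a|\Psi_b\rangle/\langle\Psi_b|\Psi_b\rangle$, and as $\ket{\Psi_b}$ ranges over ${\cal S}_B$ this Rayleigh quotient sweeps exactly $[\lambda_{\min},\lambda_{\max}]$; hence the supremum over $\ket{\Psi_b},\ket{\Psi_b'}$ of the left-hand side of \eqref{cond22c} equals $\Delta_a$, i.e.\ \eqref{cond22c} \emph{is} the assertion $\Delta_a\le\|O_a\|f_A(m)$. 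The only structural input beyond this is the frustration-freeness fact already recalled in the text: the $B$-marginal of any $\ket{\Psi_x}\in{\cal S}_X$, hence of any $\rho_x$ supported in ${\cal S}_X$, is supported in ${\cal S}_B$; therefore $\langle\Psi_x|O_a|\Psi_x\rangle/\langle\Psi_x|\Psi_x\rangle=\tr(\rho\, M_a)$ for a density matrix $\rho$ on ${\cal S}_B$, a convex combination of eigenvalues of $M_a$, so it too lies in $[\lambda_{\min},\lambda_{\max}]$.

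With this in hand, the equivalence of \eqref{cond22}, \eqref{cond22c} and \eqref{cond22a} up to a factor $2$ is a short convexity/triangle-inequality argument. For \eqref{cond22c}$\Rightarrow$\eqref{cond22a} I would replace $\rho_x$ by its $B$-marginal (supported in ${\cal S}_B$ by the marginal fact, and with the same $O_a$-expectation), decompose both it and $\rho_b$ into pure states of ${\cal S}_B$, and average \eqref{cond22c}, losing no prefactor; since \eqref{cond22} is \eqref{cond22a} restricted to pure states, this also gives \eqref{cond22a}$\Rightarrow$\eqref{cond22}. For \eqref{cond22}$\Rightarrow$\eqref{cond22c} I would fix any $\ket{\Psi_x}\in{\cal S}_X$ — nonempty because $H_X$ is frustration-free — and run the triangle inequality through it, costing a factor $2$, which closes the cycle. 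For \eqref{cond22d}, the operator $P_BO_aP_B-c(O_a)P_B$ vanishes on ${\cal S}_B^\perp$ and acts as $M_a-c(O_a)\1$ on ${\cal S}_B$, so its norm is $\max\!\big(\lambda_{\max}-c(O_a),\,c(O_a)-\lambda_{\min}\big)$; since $c(O_a)=\tr(M_a)/\dim{\cal S}_B$ is an average of eigenvalues of $M_a$, it lies in $[\lambda_{\min},\lambda_{\max}]$, whence that norm sits between $\tfrac12\Delta_a$ and $\Delta_a$, and \eqref{cond22d} is once more equivalent to $\Delta_a\le\|O_a\|f_A(m)$ up to a factor $2$.

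For the last claim I would fix $\ket{\Psi_x}\in{\cal S}_X$, an operator $Z$ on $X\setminus B$ with $Z\ket{\Psi_x}\neq0$, and $\ket{\Psi_x'}=Z\ket{\Psi_x}$, and Schmidt-decompose $\ket{\Psi_x}=\sum_\mu s_\mu\ket{u_\mu}_B\ket{v_\mu}_{X\setminus B}$ across the cut $B\,{:}\,X\setminus B$. The vectors $\ket{u_\mu}$ with $s_\mu>0$ span the support of the $B$-marginal of $\ket{\Psi_x}$, hence lie in ${\cal S}_B$, so $\ket{\Psi_x'}=\sum_\mu s_\mu\ket{u_\mu}_B(Z\ket{v_\mu})_{X\setminus B}\in{\cal S}_B\otimes{\cal H}_{X\setminus B}$ as well; consequently the $B$-marginal of the normalized $\ket{\Psi_x'}$ is again a density matrix on ${\cal S}_B$. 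Since $O_a$ is supported in $A\subset B$, both Rayleigh quotients in \eqref{cond22b} have the form $\tr(\rho\,M_a)$ with $\rho$ a density matrix on ${\cal S}_B$, so they differ by at most $\Delta_a\le\|O_a\|f_A(m)$ (using LTQO in the form \eqref{cond22c}), which is \eqref{cond22b}.

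The argument is mostly bookkeeping once the frustration-free marginal property is invoked, and I do not anticipate a real obstacle; the one step requiring a genuine (if small) idea is the last one, where the point is that although $Z\ket{\Psi_x}$ is no longer a ground state of $H_X$, it still lies in ${\cal S}_B\otimes{\cal H}_{X\setminus B}$ precisely because $Z$ is supported away from $B$. The places to be careful are the degenerate case $Z\ket{\Psi_x}=0$ (which must be excluded, so that the Rayleigh quotient is defined) and the various factors of $2$ incurred by triangle inequalities and by using $c(O_a)=\tr(M_a)/\dim{\cal S}_B$ rather than the spectral midpoint — all harmless, since they are absorbed into the allowed prefactor in $f_A(m)$.
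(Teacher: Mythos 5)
Your proof is correct, and it is organized differently from the paper's. The paper proves the equivalences as a cycle of pairwise implications (triangle inequality through a fixed $\ket{\Psi_x}$, convexity via spectral decompositions, the reduced density matrix of $\ket{\Psi_x'}$ for \eqref{cond22b}), and it does \emph{not} prove the equivalence with \eqref{cond22d} at all --- it defers that step to Corollary~3 of Ref.~\cite{michalakis:local-tqo-ffree}. You instead collapse every condition onto the single scalar $\Delta_a=\lambda_{\max}-\lambda_{\min}$ of $P_BO_aP_B$ restricted to $\mathcal S_B$, using that the numerical range of a Hermitian operator is the interval between its extreme eigenvalues and that, by frustration-freeness, every relevant expectation value is $\tr(\rho\,P_BO_aP_B)$ for some density matrix $\rho$ on $\mathcal S_B$. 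This buys two things: a self-contained proof of item~3 (since $c(O_a)$ is an average of eigenvalues, $\|P_BO_aP_B-c(O_a)P_B\|$ sits between $\Delta_a/2$ and $\Delta_a$), and a direct link to the numerical test of Section~IV, where LTQO is checked precisely by computing a spread $\lambda_{\max}(O_1)-\lambda_{\min}(O_1)$. Your Schmidt-decomposition argument for \eqref{cond22b} is a more explicit version of the paper's one-line remark that $\tr_{X\setminus B}\ket{\Psi_x'}\bra{\Psi_x'}$ is supported in $\mathcal S_B$, and you are right to flag the degenerate case $Z\ket{\Psi_x}=0$, which the paper leaves implicit. The only hypothesis you rely on that the paper does not state in the proposition itself is Hermiticity of $O_a$, but this is consistent with the paper's use of the word ``observable'' throughout.
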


\begin{proof}
That (\ref{cond22}) implies (\ref{cond22c}) is a simple use of the
triangle inequality and accordingly changing $f$ by $2f$. The reverse
implication follows immediately if we write 
\[
\frac{\tr_{X\setminus B}\big[|\Psi_x\rangle\langle
\Psi_x|\big]}{\langle\Psi_x|\Psi_x\rangle}= \sum_k p_k
|\Psi_b^k\rangle\langle\Psi_b^k|\ .  
\]
where $\ket{\Psi_b^k}\in{\cal S}_b$, $p_k\ge 0$, and $\sum p_k=1$.
That (\ref{cond22a}) implies (\ref{cond22}) is obvious. The converse
follows
directly if we write the spectral decomposition of $\rho_{x}$ and $\rho_b$
and use the convexity of the absolute value. The equivalence between
(\ref{cond22a}) and (\ref{cond22d}) can be seen following the steps of
\cite[Corollary 3]{michalakis:local-tqo-ffree}. Finally, that
(\ref{cond22a}) implies (\ref{cond22b}) can be immediately seen by
defining $\rho_b=R_b/\tr(R_b)$ with $R_b= \tr_{X\setminus B}
(|\Psi_x'\rangle\langle \Psi_x'|)$, so that
\[
 \frac{\langle \Psi_x'|O_a|\Psi_x'\rangle}{\langle \Psi_x'|\Psi_x'\rangle}
=\tr_B(\rho_b O_a)\ , 
\]
and noting that $\rho_b$ is supported in ${\cal S}_B$.
\end{proof}

Note that if all vectors in ${\cal S}_X$ are fully supported on ${\cal
S}_B$, then for all $\ket{\Psi_b}\in{\cal S}_B$ and $\ket{\Psi_x}\in {\cal
S}_X$, there exists a vector $\ket{\Psi_y}\in {\cal S}_{X\setminus B}$ such that
$|\Psi_b\rangle=\langle \Psi_y|\Psi_x\rangle$ and thus (\ref{cond22b})
also implies (\ref{cond22}). This occurs, for instance, if $H_X$ is the
parent Hamiltonian of an injective (or more generally $G$-injective) PEPS
\cite{perez-garcia:parent-ham-2d,schuch:peps-sym}.

Some remarks are in order: \emph{(i)} All conditions have to be fulfilled
independently of the lattice size $|X|$, and therefore also in the
thermodynamic limit. \\
\emph{(ii)}
Eq.~(\ref{cond22}) implies that in the thermodynamic
limit, no two states in ${\cal S}_X$ can be distinguished locally by means of $O_a$. \\
\emph{(iii)}
Eq.~(\ref{cond22c}) implies that $O_a$ cannot distinguish different states in
${\cal S}_B$, as long as the boundary of $B$ is far enough from the region
where we measure. \\
%%%%%%%%%%%%%%%%%%%%%%%%%%%
\emph{(iv)} We will, in the following, generally assume that $B$ is
spherical. Indeed, (\ref{cond22}) cannot be modified to depend on $|B|$
(and thus the shape of $B$) in a non-trivial way: On the one hand, an
exponential dependence on $|B|$ would override the scaling of $f(m)$ and
invalidate the condition. On the other hand, a polynomial scaling $p(|B|)$
can be removed by choosing a spherical $B'\subset B$ with identical $m$,
and observing that in (\ref{cond22a}), $|\tr(\rho_x O_a)-\tr(\rho_b
O_a)|=|\tr(\rho_x O_a)-\tr(\rho_{b'} O_a)| \le \|O_a\|p(|B'|)f_{A}(m)$,
where $p(|B'|)$ is
polynomial in $m$ and can thus be absorbed in $f$.
%%%%%%%%%%%%%%%%%%%%%%%%%%%

\section{Implications of the LTQO condition}

We will now analyze which restrictions the LTQO condition imposes on a
system. We start by showing a superpolynomial decay of correlations and
then use this to give the desired stability result.

\subsection{Correlation functions}

We show here that if an observable $O_a$ satisfies LTQO, then correlation
functions must decay superpolynomially\footnote{The speed of the decay in
the correlations is the same that appears in the definition of LTQO.} with
the distance.

\begin{prop}
\label{prop:corr}
If $O_a$ satisfies LTQO, then for any observable $O_b$ acting on \mbox{$X\!\setminus\!\! B$}
(cf.~Fig.~\ref{Fig:Spaces}), 
\begin{equation}
 \label{corr}
 \big|\langle O_a O_b \rangle-\langle O_a\rangle \langle O_b\rangle \big| \le 
\|O_a\| \, \|O_b\|\, f_{A}(m),
 \end{equation}
where the expectation value is taken in any normalized state 
$\ket{\Psi_x}\in \mc S_X$.
\end{prop}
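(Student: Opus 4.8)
The plan is to reduce the two-body correlation bound to the LTQO condition in the form~\eqref{cond22b}, i.e.\ the statement that applying an operator $Z$ supported on $X\setminus B$ to a groundstate $\ket{\Psi_x}$ barely changes the expectation value of $O_a$. First I would assume without loss of generality that $\langle O_b\rangle = 0$ in the state $\ket{\Psi_x}$; this is achieved by replacing $O_b$ with $O_b - \langle O_b\rangle\openone$, at the cost of at most doubling $\|O_b\|$, which can be absorbed into the prefactor of $f_A(m)$. It also suffices to treat Hermitian $O_b$, handling general $O_b$ by splitting into real and imaginary parts. Under these reductions the quantity to bound becomes simply $|\langle \Psi_x| O_a O_b|\Psi_x\rangle|$ (up to normalization), since the product term vanishes.

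Next I would massage $\langle\Psi_x| O_a O_b|\Psi_x\rangle$ into the shape of the left-hand side of~\eqref{cond22b}. The natural move is to set $\ket{\Psi_x'} = Z\ket{\Psi_x}$ for a suitable $Z$ built from $O_b$ acting on $X\setminus B$. A convenient choice is to take $Z$ proportional to $\openone + t\, O_b$ for a small real parameter $t$ (rescaled so $\|O_b\|$ is order one), so that $Z^\dagger O_a Z = O_a + t(O_a O_b + O_b O_a) + t^2 O_b O_a O_b$; expanding $\langle \Psi_x'|O_a|\Psi_x'\rangle/\langle\Psi_x'|\Psi_x'\rangle$ to first order in $t$ and using $\langle O_b\rangle = 0$ in the denominator, the coefficient of $t$ is exactly $2\,\mathrm{Re}\langle O_a O_b\rangle = 2\langle O_a O_b\rangle$ (the last equality because with $O_a$, $O_b$ Hermitian and $\langle O_a\rangle\langle O_b\rangle = 0$, the correlator is real up to the commutator term, which I would track carefully). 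Condition~\eqref{cond22b} bounds the difference of the two normalized expectation values by $\|O_a\| f_A(m)$ for \emph{every} $t$, and dividing by $t$ and letting $t\to 0$ extracts the derivative, yielding $|\langle O_a O_b\rangle| \le \tfrac12\|O_a\|\|O_b\| f_A(m)$ after restoring the normalization of $O_b$. Alternatively, and perhaps more cleanly, one can avoid the derivative entirely: choose $Z = (\openone + O_b/\|O_b\|)$, which is positive, write out $\langle \Psi_x'|O_a|\Psi_x'\rangle = \langle O_a\rangle + (2/\|O_b\|)\langle O_a O_b\rangle_{\mathrm{sym}} + (1/\|O_b\|^2)\langle O_b O_a O_b\rangle$ and $\langle\Psi_x'|\Psi_x'\rangle = 1 + 1/\|O_b\|^2 \cdot\langle O_b^2\rangle$ (again using $\langle O_b\rangle = 0$), and estimate the difference directly; all the ``extra'' terms are $O(1)$ multiples of $\|O_a\|$, but this does not immediately give the clean bound, so the limiting argument is the safer route.

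The step I expect to be the main obstacle is bookkeeping the noncommutativity: $O_a O_b$ need not be Hermitian, so $\langle O_a O_b\rangle$ is generally complex, whereas the left-hand sides of the LTQO conditions are differences of real expectation values. The resolution is to decompose $O_a O_b = \tfrac12\{O_a,O_b\} + \tfrac12[O_a,O_b]$ and note that the antisymmetrized (commutator) part is itself a correlator that can be bounded by the same technique applied to $i[O_a,O_b]$, or—more efficiently—to observe that the imaginary part of the correlator is controlled by choosing $Z = \openone + it\,O_b$ instead (still an allowed operator for the purposes of~\eqref{cond22b}, which places no Hermiticity constraint on $Z$), so that the first-order term now reads $it\langle[O_a,O_b]\rangle$, up to sign. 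Combining the real-part estimate (from $Z = \openone + tO_b$) and the imaginary-part estimate (from $Z = \openone + itO_b$) gives the full complex bound~\eqref{corr}, with the footnote's claim that the decay rate is inherited from $f_A$ being immediate since $f_A$ is untouched throughout. A final remark I would make is that the same argument, iterated, is what will drive the many-body version in Proposition~\ref{prop:manycorr}.
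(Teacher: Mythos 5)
Your overall strategy---feeding an operator $Z$ built from $O_b$ into condition (\ref{cond22b})---is the right one and is close in spirit to what the paper does, but the step you lean on to close the argument does not work. With $Z=\openone+tO_b$ and $\langle O_b\rangle=0$, the quantity controlled by (\ref{cond22b}) is $g(t)=\big(2t\langle O_aO_b\rangle+t^2\beta\big)/\big(1+t^2\langle O_b^2\rangle\big)$ with $\beta=\langle O_bO_aO_b\rangle-\langle O_a\rangle\langle O_b^2\rangle$, and the bound $|g(t)|\le\|O_a\|f_A(m)$ is \emph{uniform in} $t$. Dividing by $t$ and letting $t\to0$ therefore gives $|g(t)/t|\le\|O_a\|f_A(m)/t\to\infty$: a bounded function vanishing at $t=0$ can have arbitrarily large derivative there, so no bound on $\langle O_aO_b\rangle$ follows from this limit. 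To extract the linear coefficient you must work at finite $t$, and then the quadratic term is the obstruction: a priori $|\beta|\le 2\|O_a\|\,\|O_b\|^2$, which at the natural choice $t\sim1/\|O_b\|$ contributes a non-decaying $O(\|O_a\|\,\|O_b\|)$ error (this is exactly why your ``alternative'' with $Z=\openone+O_b/\|O_b\|$ ``does not immediately give the clean bound''). The argument can be repaired---apply (\ref{cond22b}) to both $Z_\pm=\openone\pm tO_b$, subtract the two estimates to cancel the $t^2\beta$ term, obtaining $4t|\langle O_aO_b\rangle|/(1+t^2\langle O_b^2\rangle)\le2\|O_a\|f_A(m)$, and optimize at $t=\langle O_b^2\rangle^{-1/2}$ to get $|\langle O_aO_b\rangle|\le\|O_a\|\,\|O_b\|f_A(m)$---but that symmetrization is a genuinely missing idea in your write-up, and without it the proof fails.

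The paper's proof sidesteps all of this. It writes the Hermitian $O_b=P_b-Q_b$ with $P_b,Q_b\ge0$ and $\|O_b\|=\max\{\|P_b\|,\|Q_b\|\}$, and takes $Z=\sqrt{P_b}$; since $O_a$ and $P_b$ have disjoint supports, $Z^\dagger O_aZ=O_aP_b$ \emph{exactly}, so (\ref{cond22b}) immediately gives $|\langle O_aP_b\rangle-\langle O_a\rangle\langle P_b\rangle|\le\|O_a\|\langle P_b\rangle f_A(m)\le\|O_a\|\,\|P_b\|f_A(m)$, with no expansion in $t$, no limit, and no cross terms to control. Finally, the ``main obstacle'' you identify---non-Hermiticity of $O_aO_b$---is moot: $O_a$ and $O_b$ commute because their supports are disjoint, so the correlator of two observables is automatically real and no separate treatment of a commutator part is needed.
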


\begin{proof} We can always write $O_b=P_b-Q_b$, where both $P_b,Q_b\ge 0$
and $\|O_b\|=\max\{\|P_b\|,\|Q_b\|\}$, so that we just have to prove (\ref{corr})
for $P_b\ge 0$. Defining $|\Psi_x'\rangle=\sqrt{P_b}|\Psi_x\rangle$, we
have
 \begin{equation}
 \langle O_a P_b \rangle =
 \frac{\langle \Psi_x'|O_a|\Psi_x'\rangle}
   {\langle \Psi_x'|\Psi_x'\rangle}
 \langle \Psi_x|P_b|\Psi_x\rangle\ .
 \end{equation}
Using condition (\ref{cond22b}) and the fact that the last factor is bounded by $\|P_b\|$, we obtain (\ref{corr}) (up to a factor of $2$).
\end{proof}

We can iterate Eq.~(\ref{corr}) to prove that also many-site correlation
functions decay fast. To this end, let us consider some regions $A_1, A_2,
\ldots, A_M$, and denote by $m_k$ the shortest distance between $A_k$
and the rest of the regions (see Fig.~\ref{Fig:Spaces}). Then,
\begin{prop}
\label{prop:manycorr}
For any set of observables $O_{a_k}$ verifying LTQO and acting on regions A$_k$,
 \begin{equation}
 \label{corr2}
 \Big|\langle \prod_{k=1}^M O_{a_k} \rangle- \prod_{k=1}^M\langle O_{a_k}\rangle\Big| \le \prod_{k=1}^M\|O_{a_k}\|\sum_{k=1}^Mf_{A_k }(m_k).
 \end{equation}
\end{prop}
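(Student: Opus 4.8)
The plan is to prove Proposition~\ref{prop:manycorr} by induction on $M$, peeling off one region at a time and applying the two-body bound of Proposition~\ref{prop:corr} at each step. The base case $M=1$ is trivial (the left-hand side vanishes), and $M=2$ is exactly Eq.~(\ref{corr}) after checking that the geometric setup matches: if $A_1$ and $A_2$ are the two regions, then taking $B$ to be (say) a ball around $A_1$ whose boundary is at distance $m_1$ from $A_1$ and which excludes $A_2$, we have $O_{a_2}$ supported on $X\setminus B$, so Proposition~\ref{prop:corr} gives $|\langle O_{a_1}O_{a_2}\rangle - \langle O_{a_1}\rangle\langle O_{a_2}\rangle| \le \|O_{a_1}\|\,\|O_{a_2}\|\,f_{A_1}(m_1)$; by symmetry one may also bound it by the $f_{A_2}(m_2)$ term, and in particular by the sum.

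For the inductive step, suppose the bound holds for $M-1$ observables. Write
\begin{align*}
\Big|\big\langle \textstyle\prod_{k=1}^M O_{a_k}\big\rangle - \textstyle\prod_{k=1}^M\langle O_{a_k}\rangle\Big|
&\le \Big|\big\langle \textstyle\prod_{k=1}^M O_{a_k}\big\rangle - \langle O_{a_M}\rangle\big\langle \textstyle\prod_{k=1}^{M-1} O_{a_k}\big\rangle\Big| \\
&\quad + |\langle O_{a_M}\rangle|\,\Big|\big\langle \textstyle\prod_{k=1}^{M-1} O_{a_k}\big\rangle - \textstyle\prod_{k=1}^{M-1}\langle O_{a_k}\rangle\Big|\ .
\end{align*}
The second term is bounded, using $|\langle O_{a_M}\rangle|\le\|O_{a_M}\|$ and the induction hypothesis, by $\prod_{k=1}^M\|O_{a_k}\|\sum_{k=1}^{M-1}f_{A_k}(m_k)$. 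For the first term, group $O_b := \prod_{k=1}^{M-1}O_{a_k}$ into a single observable supported on $A_1\cup\dots\cup A_{M-1}$ and apply Proposition~\ref{prop:corr} with $A=A_M$, $O_a=O_{a_M}$, and a region $B$ chosen around $A_M$ so that $\partial B$ is at distance $m_M$ from $A_M$ and $B$ excludes all the other $A_k$ (this is possible precisely because $m_M$ is the distance from $A_M$ to the rest of the regions); since $\|O_b\|\le\prod_{k=1}^{M-1}\|O_{a_k}\|$, this term is at most $\prod_{k=1}^M\|O_{a_k}\|\,f_{A_M}(m_M)$. Adding the two contributions gives $\prod_{k=1}^M\|O_{a_k}\|\sum_{k=1}^M f_{A_k}(m_k)$, completing the induction.

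The main subtlety — not really an obstacle, but the point that needs care — is the choice of the region $B$ at each step and the matching of distances. One must verify that, for the region $A_M$ being split off, one can find a connected $B\supset A_M$ with smooth boundary such that (a) $B$ contains $A_M$ but none of the $A_1,\dots,A_{M-1}$, so that $O_b$ is genuinely supported on $X\setminus B$, and (b) the distance from $A_M$ to $\partial B$ is (at least) $m_M$, so that the decay function enters with the correct argument. Since $m_M$ is defined as the shortest distance from $A_M$ to the union of the other regions, such a $B$ exists (e.g.\ a slightly shrunk neighborhood of $A_M$ of radius just below $m_M$), and monotonicity of $f_{A_M}$ lets us replace the slightly-smaller distance by $m_M$ at the cost of an irrelevant constant already absorbed in the definition. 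A secondary point is that Proposition~\ref{prop:corr} is stated for observables $O_a$ satisfying LTQO and arbitrary $O_b$ on $X\setminus B$; here each $O_{a_k}$ satisfies LTQO by hypothesis, so at every step the operator being ``pulled out'' is of the required type, and no LTQO assumption on the grouped operator $O_b$ is needed. Finally, one should note the bound is symmetric enough that one could equally peel off whichever region currently has the largest $m_k$ first; the stated sum $\sum_k f_{A_k}(m_k)$ is obtained regardless of the order.
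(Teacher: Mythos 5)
Your proof is correct and is essentially the paper's own argument: peeling off one observable at a time and bounding each step via Proposition~\ref{prop:corr} is exactly the telescoping decomposition $\sum_n r_n s_n$ that the paper writes out explicitly, with the same use of submultiplicativity and $|\langle O_{a_k}\rangle|\le\|O_{a_k}\|$ for the remaining factors. The geometric care you take in choosing $B$ around the peeled-off region and matching the distance $m_k$ (together with monotonicity of $f_{A_k}$) is a point the paper leaves implicit, but it does not change the argument.
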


\begin{proof}
We define
\[
 r_n=\langle \prod_{k=1}^n O_{a_k} \rangle- \langle\prod_{k=1}^{n-1}
O_{a_k}\rangle \langle O_{a_n}\rangle,\quad s_n=\prod_{k+1}^M\langle
O_{a_k}\rangle\ , 
\]
with $s_M=1$ and $r_1=0$.  We have that
\[
 \Big|\langle \prod_{k=1}^M O_{a_k} \rangle- \prod_{k=1}^M\langle
O_{a_k}\rangle\Big|= \Big| \sum_{n=1}^M r_n s_n \big| \le \sum_{n=1}^M
|r_n| |s_n|\ .
\]
Then, Eq.~(\ref{corr2}) follows from
\[
 |s_n|\le \prod_{k+1}^M \|O_{a_k}\|, \quad |r_n|\le f_{A_n}(m_n)
\prod_{k=1}^n \|O_{a_k}\|\ ,
\]
where in the last inequality we have used (\ref{corr}), and the fact that
$\|AB\|\le\|A\|\,\|B\|$.

\end{proof}

\subsection{Robustness against perturbations}

We are interested in seeing how the properties of the ground state
subspace ${\cal S}_X$ change if we modify the states locally. In
particular, we want to know the behavior of the expectation values of
local observables:
\begin{equation}
 \label{form1}
 o_a(\epsilon):=
 \frac{\langle \Psi_x(\epsilon)| O_a|\Psi_x(\epsilon)\rangle}
 {\langle \Psi_x(\epsilon) |\Psi_x(\epsilon)\rangle}
 \end{equation}
if we perturb every state $\ket{\Psi_x}\in{\cal S}_X$ as follows:
 \begin{equation}
 \label{Psieps}
 |\Psi_x(\epsilon)\rangle=R_X(\epsilon)|\Psi_x\rangle
 \end{equation}
in the limit $|X|\rightarrow \infty$, where
$R_X(\epsilon)=R(\epsilon)^{\otimes |X|}$.  Here,
$R(\epsilon)=\Id+\epsilon Z$, where $Z$ is an operator acting on a single
lattice site with $\|Z\|=1$, and $\epsilon$ sufficiently small.
Note that we do not need to restrict ourselves to $Z$ acting on a single
lattice site; in fact, we can always group spins into bigger spins and
assume $Z$ acts on a single spin of the new lattice. Recall that these
were exactly the natural perturbations in the context of PEPS. 

For simplicity in the notation we will restrict to the case of
translational invariant perturbations, but the results hold true with the
same proofs in the case of a site-dependent perturbation of the form
$\otimes_{i\in X} R_i(\epsilon) |\Psi\rangle$.\footnote{There is another
motivation for this type of perturbation that goes beyond PEPS. Imagine
each spin is weakly coupled to a local environment. After the action of
the noise for time $\epsilon$, the system is in a convex combination of
states of the form $\otimes_{i\in X} R_i(\epsilon) |\Psi\rangle$, where
each $R_i(\epsilon)$ is $\epsilon$-close to the identity. Hence, LTQO
implies that the system is also stable against this type of dissipative
noise.}

We will now show that LTQO implies robustness for local observables: If the observable $O_a$ satisfies LTQO, $o_a(\epsilon)$ is continuous at $\epsilon=0$, and
its first derivative at that point is finite. Moreover, if not only the observable $O_a$ but also the single site region satisfies LTQO, then all higher order derivatives are also finite at $\epsilon=0$.

\begin{prop}
If the observable $O_a$ satisfies LTQO, then $o_a(\epsilon)$ is continuous at $\epsilon=0$. 
More specifically, there exists a function $k_{A}(\epsilon)\rightarrow
0$ (as $\epsilon\rightarrow 0$) which is independent of $|X|$, such that
$$|o_a(\epsilon)-o_a(0)|\le \|O_a\|k_{A}(\epsilon)$$ for all
lattice sizes larger than some $|X_\epsilon|$.  
\end{prop}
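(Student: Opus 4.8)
The plan is to split the global perturbation $R_X(\epsilon)$ into a ``near'' part supported on a region $B\supset A$ and a ``far'' part supported on $X\setminus B$, and to control the two resulting contributions by completely different means: the far part by LTQO, the near part by elementary finite-dimensional estimates that do not see the system size. Concretely, fix a spherical region $B$ with $A\subset B$ and $m=\mathrm{dist}(A,\partial B)$, write $R_X(\epsilon)=R_B(\epsilon)\otimes R_{X\setminus B}(\epsilon)$ with $R_B(\epsilon)=R(\epsilon)^{\otimes|B|}$, and set $\ket{\Phi(\epsilon)}=R_{X\setminus B}(\epsilon)\ket{\Psi_x}$, so that $\ket{\Psi_x(\epsilon)}=R_B(\epsilon)\ket{\Phi(\epsilon)}$. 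By the triangle inequality, $|o_a(\epsilon)-o_a(0)|$ is at most the sum of (i) $\big|\langle O_a\rangle_{\Psi_x(\epsilon)}-\langle O_a\rangle_{\Phi(\epsilon)}\big|$, the effect of the near factor $R_B(\epsilon)$, and (ii) $\big|\langle O_a\rangle_{\Phi(\epsilon)}-\langle O_a\rangle_{\Psi_x}\big|$, the effect of the far factor, where $\langle\,\cdot\,\rangle_\psi$ denotes the normalized expectation value in $\ket\psi$.

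For term (ii), observe that $\ket{\Phi(\epsilon)}$ is obtained from the genuine ground state $\ket{\Psi_x}\in\mathcal S_X$ by applying $R_{X\setminus B}(\epsilon)$, an operator supported on $X\setminus B$ which is invertible for $\epsilon$ small (so $\ket{\Phi(\epsilon)}\neq0$). This is precisely the situation described by condition~\eqref{cond22b}, which by the equivalent reformulations established above is implied by the hypothesis that $O_a$ satisfies LTQO; hence term (ii) $\le\|O_a\|\,f_A(m)$, with $f_A$ independent of $|X|$. It is essential here that after the split $\ket{\Psi_x}$ enters only through a true ground state: $\ket{\Psi_x(\epsilon)}$ itself is generally not a ground state of $H$, so LTQO cannot be applied to it directly, which is exactly why the near part has to be peeled off first.

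For term (i), I would move the near factor onto the observable, so that the quantity to bound becomes the difference between $\langle \Phi(\epsilon)|R_B(\epsilon)^\dagger O_a R_B(\epsilon)|\Phi(\epsilon)\rangle/\langle\Phi(\epsilon)|R_B(\epsilon)^\dagger R_B(\epsilon)|\Phi(\epsilon)\rangle$ and $\langle\Phi(\epsilon)|O_a|\Phi(\epsilon)\rangle/\langle\Phi(\epsilon)|\Phi(\epsilon)\rangle$. Since $R_B(\epsilon)=R(\epsilon)^{\otimes|B|}$ with $\|R(\epsilon)-\Id\|=\epsilon$, a telescoping estimate gives $\|R_B(\epsilon)^\dagger O_a R_B(\epsilon)-O_a\|\le\|O_a\|\big((1+\epsilon)^{2|B|}-1\big)$ and $\|R_B(\epsilon)^\dagger R_B(\epsilon)-\Id\|\le(1+\epsilon)^{2|B|}-1$. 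Combining these with the elementary bound $\big|\langle\phi|O'|\phi\rangle/\langle\phi|W|\phi\rangle-\langle\phi|O|\phi\rangle/\langle\phi|\phi\rangle\big|\le(\|O'-O\|+\|O\|\,\|W-\Id\|)/(1-\|W-\Id\|)$, valid whenever $\|W-\Id\|<1$ and uniformly in $\ket\phi$, bounds term (i) by $\|O_a\|\,g(|B|,\epsilon)$ with $g(|B|,\epsilon)\to0$ whenever $|B|\,\epsilon\to0$; crucially $g$ depends only on $|B|$ and $\epsilon$, not on $|X|$.

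It remains to balance the two bounds: term (ii) shrinks as $m\to\infty$, whereas term (i) grows because $|B|$ grows polynomially in $m$ (as $m^2$ on the square lattice). I would therefore let $B=B(\epsilon)$ grow slowly with $1/\epsilon$, e.g.\ $m(\epsilon)=\lfloor\epsilon^{-1/4}\rfloor$ on the square lattice, so that simultaneously $m(\epsilon)\to\infty$ and $|B(\epsilon)|\,\epsilon\to0$ as $\epsilon\to0$; then $k_A(\epsilon):=f_A(m(\epsilon))+g(|B(\epsilon)|,\epsilon)\to0$, is independent of $|X|$, and the asserted bound holds for every lattice large enough to contain $B(\epsilon)$, i.e.\ for $|X|\ge|X_\epsilon|$. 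The main obstacle is exactly this balancing together with the uniformity in the system size: one can take $B$ neither fixed (term (ii) would not vanish) nor growing too fast (term (i) would blow up), and the near-region estimate must be phrased entirely in operator norm so that it remains blind to $|X|$.
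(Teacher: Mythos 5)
Your proposal is correct and follows essentially the same route as the paper's proof: split $R_X(\epsilon)$ into a far part handled by condition~(\ref{cond22b}) and a near part on $B_m$ bounded in operator norm via $\|R_{B_m}-\Id\|\le(1+|\epsilon|)^{|B_m|}-1$, then let $m(\epsilon)\to\infty$ slowly enough that $|B_m|\epsilon\to0$ (your $m=\lfloor\epsilon^{-1/4}\rfloor$ is the paper's choice $m=|\epsilon|^{-1/2+x}$ with $x=1/4$). The intermediate state, the elementary quotient bound, and the balancing argument all match the paper's.
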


\begin{proof}
The idea of the proof is to decompose the perturbation $R_X(\epsilon)$,
Eq.~(\ref{Psieps}), into two parts: One part is far away from $O_a$ and
can thus be dealt with using LTQO, while the other part can be bounded
directly as it only acts on a restricted region. 

 We start by choosing a number $m := m(\epsilon)$ and a region $B_{m}\supset
A$, such that the distance between $A$ and $\partial B_m$ is $m$. Define
$|\Psi_x'(\epsilon)\rangle = R_{X\setminus
B_m}(\epsilon)|\Psi_x(0)\rangle$, i.e., $|\Psi_x(0)\rangle$ is only
modified outside of region $B_m$. From now on we will omit the dependence
of all the states and operators on $\epsilon$ to facilitate reading. We
write the numerator of
(\ref{form1}) as
 \begin{equation}
 \langle \Psi_x | O_a |\Psi_x\rangle =
 \langle \Psi_x' | O_a |\Psi_x'\rangle +
 \langle \Psi_x' | T_a |\Psi_x'\rangle\ ,
 \end{equation}
where $T_a=R_{B_m}^\dagger O_a R_{B_m}-O_a$. In the same way, we replace the denominator by
 \begin{equation}
 \langle \Psi_x | \Psi_x\rangle =
 \langle \Psi_x' | \Psi_x'\rangle +
 \langle \Psi_x' | S_a |\Psi_x'\rangle\ ,
 \end{equation}
where $S_a=R_{B_m}^\dagger R_{B_m}-\Id$. With simple manipulations, we write
 \begin{equation}
 \label{form2}
 |o_a(\epsilon)-o_a(0)| \le \big|  \frac{\langle \Psi'_x | O_a |\Psi'_x\rangle}{\langle \Psi'_x | \Psi'_x\rangle} -o_a(0)
 \big| + h(\epsilon)\ ,
 \end{equation}
where
 \begin{equation}
 h(\epsilon)= \frac{\|O_a\|\|S_a\|+\|T_a\|}{1-\|S_a\|}\ .
 \end{equation}
In order to bound this term, we write $T_a =(R_{B_m}-\Id)^\dagger O_a
(R_{B_m}-\Id) + (R_{B_m}-\Id)^\dagger O_a + O_a (R_{B_m}-\Id)$, so that
 \begin{equation}
 \|T_a\| \le \|O_a\| \|R_{B_m}-\Id\| \left(2+\|R_{B_m}-\Id\|\right)\ .
 \end{equation}
The same bound applies to $\|S_a\|$ when replacing $\|O_a\|$ by $1$. Using
the binomial expansion of $R_{B_m}$, we have
 \begin{equation}
 \|R_{B_m}-\Id\|\le(1+ |\epsilon|)^{(|A|+2m)^2}-1\ ,
 \end{equation}
where we have used that $|B_m|\le (|A|+2m)^2$. Choosing
$m(\epsilon)=|\epsilon|^{-1/2+x}$ with $x\in(0,1/2)$, we
have $\|R_{B_m}-\Id\|\to 0$ in the limit $\epsilon\to 0$, and thus
$h(\epsilon)\to 0$. Finally, using (\ref{cond22b}) we can bound the first
term of (\ref{form2}) by $\|O_a\| f_{A}(m)$, which vanishes in that
limit as well.
\end{proof}

\begin{prop}
If an observable $O_a$ satisfies LTQO, then $do_a(\epsilon)/d\epsilon$ is finite at
$\epsilon=0$. Formally, the limit $$\lim_{|X|\rightarrow \infty}
\left.\frac{do_a(\epsilon)}{d\epsilon}\right|_{\epsilon=0}$$ exists and is
finite.  
\end{prop}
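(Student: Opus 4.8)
The plan is to differentiate the explicit rational expression for $o_a(\epsilon)$ in Eq.~(\ref{form1}) at $\epsilon=0$ and show that the resulting expression is bounded uniformly in $|X|$. Writing $N(\epsilon)=\langle\Psi_x(\epsilon)|O_a|\Psi_x(\epsilon)\rangle$ and $D(\epsilon)=\langle\Psi_x(\epsilon)|\Psi_x(\epsilon)\rangle$, with $R(\epsilon)=\Id+\epsilon Z$, we have $|\Psi_x(\epsilon)\rangle = \sum_{i\in X}(\Id+\epsilon Z_i)|\Psi_x\rangle$ expanded as a polynomial in $\epsilon$ of degree $|X|$, so $N$ and $D$ are polynomials in $\epsilon$. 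The first derivative is
\begin{equation}
\left.\frac{do_a}{d\epsilon}\right|_{0} = \frac{N'(0)D(0)-N(0)D'(0)}{D(0)^2}\,.
\end{equation}
Since $D(0)=\langle\Psi_x|\Psi_x\rangle$ and $N(0)=\langle\Psi_x|O_a|\Psi_x\rangle$ are fixed, the content of the claim is a uniform-in-$|X|$ bound on $N'(0)$ and $D'(0)$ after normalizing $\langle\Psi_x|\Psi_x\rangle=1$. Explicitly, $N'(0)=\sum_{i\in X}\langle\Psi_x|(Z_i^\dagger O_a + O_a Z_i)|\Psi_x\rangle$ and $D'(0)=\sum_{i\in X}\langle\Psi_x|(Z_i^\dagger+Z_i)|\Psi_x\rangle$, where the sums run over all $|X|$ sites. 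Naively each term is $O(1)$ and the sum is $O(|X|)$, which diverges; the whole point is that LTQO forces enough cancellation to make these sums convergent.

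The key mechanism is the following. For a site $i$ far from $A$, the operator $O_a$ acts trivially there, so $\langle\Psi_x|(Z_i^\dagger O_a+O_a Z_i)|\Psi_x\rangle$ should be close to $\langle O_a\rangle\,\langle Z_i^\dagger+Z_i\rangle$, i.e. the $i$-th term of $N'(0)$ should be close to $o_a(0)$ times the $i$-th term of $D'(0)$. More precisely, grouping, the combination that appears in the numerator of $do_a/d\epsilon$ is $\sum_{i\in X}\big[\langle\Psi_x|(Z_i^\dagger O_a+O_a Z_i)|\Psi_x\rangle - o_a(0)\langle\Psi_x|(Z_i^\dagger+Z_i)|\Psi_x\rangle\big]$. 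I would bound the $i$-th summand using the correlation decay of Proposition~\ref{prop:corr}: treating $Z_i^\dagger$ (or $Z_i$) as the observable $O_b$ on $X\setminus B$ with $B=B_{m_i}$ a ball around $A$ of radius $m_i = \mathrm{dist}(A,i)$ (minus the boundary width), each term is bounded by $2\|O_a\|\,\|Z\|^2 f_A(m_i) \le 2\|O_a\| f_A(m_i)$. Summing over the lattice, $\sum_{i\in X} f_A(\mathrm{dist}(A,i)) \le \sum_{r\ge 0} (\text{number of sites at distance } r)\, f_A(r) \lesssim \sum_{r\ge 0} r^{d-1} f_A(r)$, which converges because $f_A$ decays superpolynomially. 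This gives a bound on $N'(0)-o_a(0)D'(0)$ that is independent of $|X|$, hence $do_a/d\epsilon|_0$ is uniformly bounded, and the limit as $|X|\to\infty$ exists by a standard argument (the sequence of derivatives is Cauchy since the tail $\sum_{r\ge R} r^{d-1}f_A(r)\to 0$).

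The main obstacle is handling the sites $i$ that lie \emph{inside} or \emph{on the boundary of} $A$, and more generally making precise the reduction of each summand to the correlation bound (\ref{corr}): one must carefully choose the splitting $X = B \sqcup (X\setminus B)$ for each $i$ so that $Z_i$ genuinely acts on $X\setminus B$ while $O_a$ stays on $A\subset B$, and verify that the finitely many near terms (those within a bounded distance of $A$) contribute only an $|X|$-independent constant — which is immediate since there are $O(|A|)=O(1)$ of them and each contributes at most $O(\|O_a\|\,\|Z\|^2)$. A secondary technical point is that Proposition~\ref{prop:corr} is stated for a single $O_b$ on $X\setminus B$ and an expectation in a state of $\mathcal S_X$; since here the state $|\Psi_x\rangle$ is itself in $\mathcal S_X$ this applies directly, but one should note that $f_A$, not $f$, appears, matching the hypothesis that only the observable $O_a$ (and hence, via Proposition~\ref{prop:corr}, its correlation function) satisfies LTQO — no assumption on the single-site region is needed for the first derivative, consistent with the remark preceding the proposition.
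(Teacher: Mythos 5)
Your proposal is correct and follows essentially the same route as the paper: both isolate the $O(1)$ contribution from sites in (or near) $A$, recognize that the remaining sum over distant sites is a sum of connected correlators $\langle W_n O_a\rangle-\langle W_n\rangle\langle O_a\rangle$ with $W_n=Z_n+Z_n^\dagger$ (your $N'(0)-o_a(0)D'(0)$ after normalization), and invoke Proposition~\ref{prop:corr} together with the superpolynomial decay of $f_A$ to show the lattice sum converges uniformly in $|X|$. The paper merely packages the same computation by absorbing the quotient rule into the operator $W=Z+Z^\dagger+\epsilon Z^\dagger Z$ before differentiating.
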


\begin{proof}
In order to determine $o'_a=do_a(\epsilon)/d\epsilon$ at $\epsilon=0$, we
first have to take the derivative of $\ket{\Psi(\epsilon)}$ as given in
(\ref{Psieps}). We split $o'_a$ into two parts: \emph{(i)} one
part corresponding to the  derivative involving lattice sites included in
$A$; \emph{(ii)}~the rest. The first is obviously finite. The second can
be written as $\tilde o'_a$
where
\begin{equation}
 \tilde o_a(\epsilon)=
 \frac{\langle \Psi_x| \tilde R(\epsilon) O_a|\Psi_x\rangle}
 {\langle \Psi_x |\tilde R(\epsilon)|\Psi_x\rangle}
 \end{equation}
with $\tilde R(\epsilon) = (\Id + \epsilon W)^{\otimes |X\setminus A|}$ and
$W=Z+Z^\dagger+ \epsilon Z^\dagger Z$. Taking the derivative and setting
$\epsilon=0$ we obtain
 \begin{equation}
 \tilde o'_a = \sum_{n\notin A} \left[\langle W_n O_a\rangle - \langle W_n\rangle\langle O_a\rangle\right],
 \end{equation}
where the sum is extended to all sites not belonging to $A$, $W_n$ denotes
$W$ acting on site $n$, and the expectation values are taken in the
(normalized) state $\ket{\Psi_x}$. Using that the correlation functions
decay faster than any polynomial, Eq.~(\ref{corr}), the sum converges in
the limit $|X|\rightarrow\infty$.  
\end{proof}

One can extend the proof to any higher order derivative.

\begin{prop}\label{prop8}
If both an observable $O_a$ and the single site region satisfy LTQO, then
\begin{equation}
  \lim_{|X|\rightarrow\infty} \frac{d^no_a( \epsilon)}{d\epsilon^n}
 \Big|_{\epsilon=0}
 \end{equation}
exists and is finite.
\end{prop}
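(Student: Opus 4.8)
The plan is to extend the argument used for the first derivative by iterating the correlation-decay estimate, now in its many-body form (Proposition~\ref{prop:manycorr}).  As in the previous proof, I would first separate the derivative into a part acting on sites inside $A$, which contributes only finitely many terms and is therefore trivially finite and $|X|$-independent in the limit, and the part coming from the sites in $X\setminus A$.  For the latter, I would again use the reformulation
\[
\tilde o_a(\epsilon)=\frac{\langle\Psi_x|\tilde R(\epsilon)O_a|\Psi_x\rangle}{\langle\Psi_x|\tilde R(\epsilon)|\Psi_x\rangle},\qquad
\tilde R(\epsilon)=(\Id+\epsilon W)^{\otimes|X\setminus A|},
\]
with $W=Z+Z^\dagger+\epsilon Z^\dagger Z$, and compute $d^n/d\epsilon^n$ at $\epsilon=0$.

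The key combinatorial point is that the $n$-th derivative of a ratio $N(\epsilon)/M(\epsilon)$ is, by the quotient/Leibniz rule, a finite sum of products of derivatives of $N$ and of $1/M$; and each derivative of $N(\epsilon)=\langle\Psi_x|\tilde R(\epsilon)O_a|\Psi_x\rangle$ or of $M(\epsilon)=\langle\Psi_x|\tilde R(\epsilon)|\Psi_x\rangle$ at $\epsilon=0$ is, up to a bounded combinatorial coefficient, a sum over tuples of distinct sites $n_1,\dots,n_j\notin A$ (at most $n$ of them, since $(\Id+\epsilon W)$ contributes at most one power of $\epsilon$ per site) of an expectation value of the form $\langle W_{n_1}\cdots W_{n_j}O_a\rangle$ (or without $O_a$).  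After expanding the quotient rule and collecting terms, one obtains that $d^no_a/d\epsilon^n|_{\epsilon=0}$ is a finite linear combination, with $|X|$-independent coefficients, of \emph{connected} (truncated) correlation functions of the operators $W_{n_k}$ and $O_a$ — the disconnected pieces cancel exactly as in the $n=1$ case, which is precisely what the telescoping in the proof of Proposition~\ref{prop:manycorr} is designed to exploit.  Since $W$, $O_a$, and the single-site region all satisfy LTQO (the latter by hypothesis, which is what lets us apply Proposition~\ref{prop:manycorr} to the one-site operators $W_{n_k}$), each such truncated correlation of a fixed number of operators is bounded by $\bigl(\prod\|\cdot\|\bigr)\sum_k f(m_k)$, which decays superpolynomially in the pairwise separations.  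Summing over all site-tuples then gives a convergent multi-dimensional series (a superpolynomially decaying summand beats the polynomial volume growth of the shells at distance $m$), uniformly in $|X|$, so the limit exists and is finite.

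Concretely I would organize the proof as: (1) reduce to $\tilde o_a$ and discard the finitely many $A$-derivatives; (2) apply the Faà di Bruno / quotient rule to write $\tilde o_a^{(n)}(0)$ as a finite sum of products $N^{(j_1)}(0)\cdots(1/M)^{(j_r)}(0)$ with $\sum j_i=n$; (3) express each $N^{(j)}(0)$ and $M^{(j)}(0)$ as a sum over ordered tuples of distinct sites of expectation values $\langle W_{n_1}\cdots W_{n_j}(O_a)\rangle$; (4) recombine and show the disconnected contributions cancel, leaving only truncated correlators; (5) bound each truncated correlator via Proposition~\ref{prop:manycorr} and sum the resulting superpolynomially decaying series, noting the bound is independent of $|X|$.

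The main obstacle I expect is step (4): verifying that the disconnected pieces cancel to leave only genuinely connected correlators.  For $n=1$ this cancellation is the single subtraction $\langle W_nO_a\rangle-\langle W_n\rangle\langle O_a\rangle$; for general $n$ it is a more intricate inclusion–exclusion over which subsets of the $W$-insertions are tied to $O_a$ through the denominator expansion, and one has to be careful that the combinatorial coefficients match so that every term that is \emph{not} suppressed by some $f(m_k)$ indeed cancels.  A clean way to handle this is to introduce the cumulant (connected correlation function) generating-function viewpoint: $\log\langle\Psi_x|\tilde R(\epsilon)|\Psi_x\rangle$ has a Taylor expansion in $\epsilon$ whose coefficients are truncated correlators, and $\tilde o_a$ itself is a ratio that, upon taking $\log$ of numerator and denominator, produces only truncated objects; differentiating $n$ times and invoking the decay bound then finishes the argument without an explicit inclusion–exclusion bookkeeping.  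Everything else — the Leibniz expansion, the crude operator-norm bounds, the summability of a superpolynomially decaying sequence against polynomial volume — is routine and parallels the earlier propositions.
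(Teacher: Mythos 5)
Your proposal matches the paper's approach: the paper gives no detailed proof of Proposition~\ref{prop8}, stating only that the argument is analogous to the first-derivative case and relies on the fast decay of connected correlation functions, Eq.~(\ref{corr2}), which is exactly the strategy you carry out. Your fleshed-out version (quotient/Leibniz expansion, reduction to truncated correlators via the cumulant viewpoint, and summability of the superpolynomially decaying terms against polynomial volume growth, uniformly in $|X|$) is consistent with, and considerably more detailed than, what the paper itself provides.
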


The proof is analogous to the one above, although a bit more involved. It
relies again on the fact that connected correlation functions decay
sufficiently fast, Eq.~(\ref{corr2}).

Note that to prove the finiteness of the derivatives, we only use the
decay of the correlation functions. The full power of the LTQO condition is
only used directly in the continuity proof. This is a formal proof in this
context that with exponential decay of correlations, one can only expect
first order or infinite-order phase transitions. Having LTQO rules out the
first-order ones. 

If we do not have LTQO for the single site region and we have it only for a particular observable $O_a$, we cannot guarantee Proposition \ref{prop8} to hold. In this case, we can only deduce continuity and bounded first derivative. This rules out first and second order phase transitions witnessed by
$O_a$, but leaves open the possibility of higher-order ones. Finally, note that only the weakest condition (\ref{cond22b}) has been used for the proofs in this section.  

\section{LTQO in PEPS
\label{sec:LTQO-and-peps}}

We have seen that LTQO ensures stability for PEPS under a class of natural
perturbations. In this Section, we will analyze how to detect LTQO in PEPS,
and discuss PEPS-specific implications of the stability condition.

\subsection{Detecting LTQO in PEPS}

Consider a translationally invariant PEPS $\ket{\Psi_x}$ 
(see upper-left part of Fig.~\ref{Fig:PEPS}) with some boundary condition
(which by the very definition of LTQO will play no role). The PEPS is
fully characterized by a tensor $A$ with some physical index
$n=1,\ldots,d$ and auxiliary indices $\alpha_k=1,\ldots,D$, where $d$ is
the dimension of the spin and $D$ the \emph{bond dimension}. In order to
investigate the LTQO property for this state, we consider an observable
$O_1$, with $\|O_1\|=1$, acting on the central spin in the figure, which we will call spin 1;
note that we can always block spins such that the operator $O_1$ only acts
on a single effective spin. We now define a one-dimensional structure of
tensors by layer-wise blocking tensors around spin 1: The first tensor
corresponds to spin 1 itself. The second is obtained by contracting all
tensors around spin 1 (marked green in Fig.~\ref{Fig:PEPS}). The third one
contains those next to the previous layer (marked violet in
Fig.~\ref{Fig:PEPS}), and so on. The resulting chain of
tensors is represented in the lower-left part of Fig.~\ref{Fig:PEPS}. We
denote them by $B[1],B[2],\ldots,B[m]$, where the dimensions of the
physical and the auxiliary indices now grow with the layer $m$. That is,
in this representation the PEPS $\ket{\Psi_x}$ has the form of a (non
translationally invariant) matrix product state (MPS).

\begin{figure}[t]
\includegraphics[width=1\columnwidth]{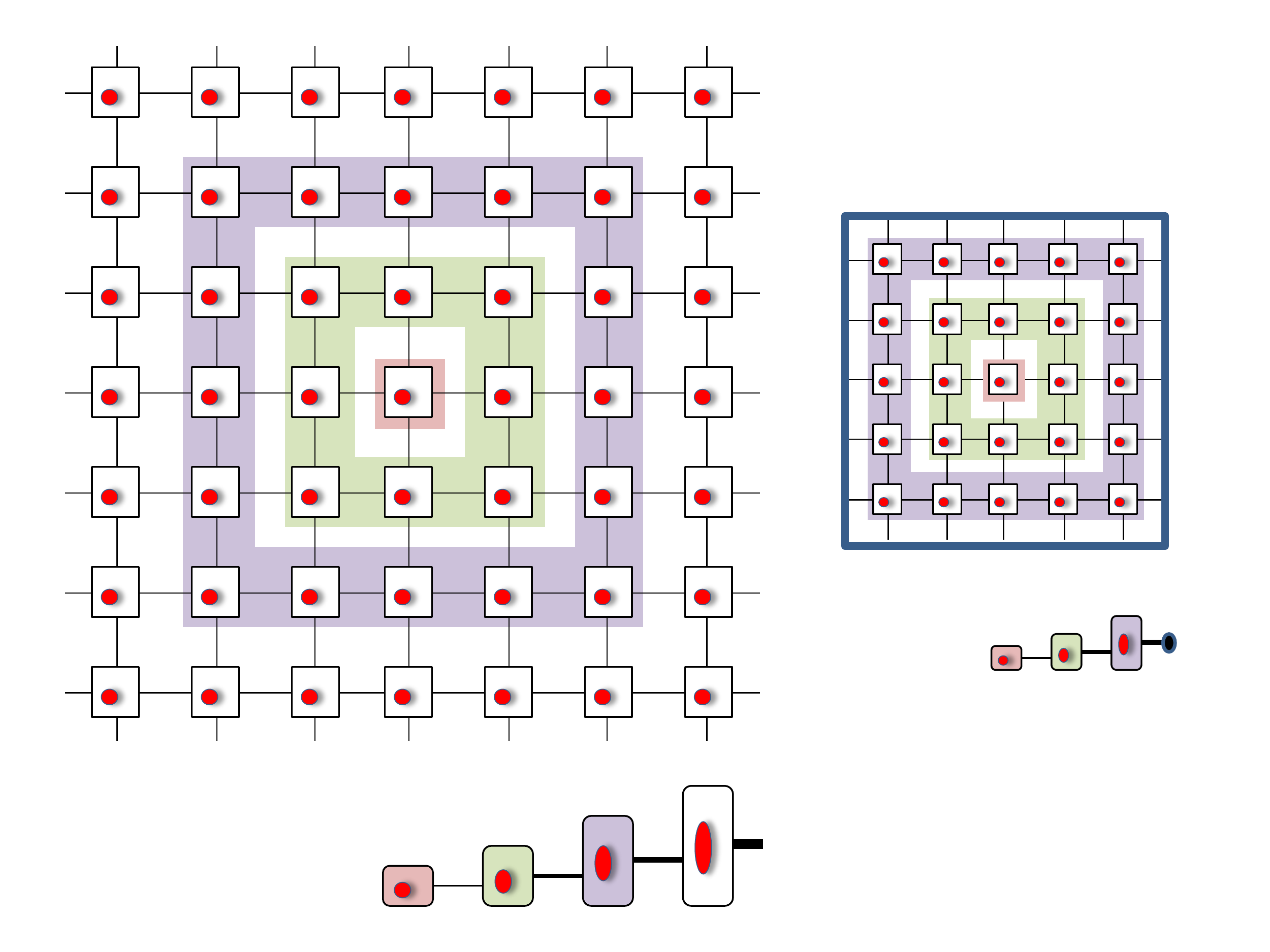}
\caption{\label{Fig:PEPS}
Verifying LTQO in PEPS. Left: By consecutively blocking regions around
the central spin, we can map the PEPS onto a one-dimensional Matrix
Product State (MPS). Right: The effect of the boundary condition (dark
blue) on the central spin can be mapped to an eigenvalue problem for the
transfer operators of the one-dimensional chain (see text).}
\end{figure}

Let us now consider a region $B$ in the original lattice centered around
spin 1, containing layers $1$ to $m$. Any state $\ket{\Psi_b}\in {\cal
S}_B$ can be obtained by contracting those layers with an arbitrary tensor
on its boundary (see upper-right part of Fig.~\ref{Fig:PEPS}). In terms of
the MPS representation (lower part), this just corresponds to contracting
a vector with the auxiliary index on the right. Thus,
\[
% \label{Psib}
 |\Psi_R\rangle_B = \sum_{n_1,\ldots,n_m} (B[1]^{n_1}| 
\cdots B[m]^{n_m} |R) |n_1,\ldots,n_m\rangle
\]
where $B[k]^n$ are $D_{k-1}\times D_{k}$ matrices, and $|B[1]^n)$ and
$|R)$ are vectors of dimensions $D_1$ and $D_m$, respectively (we have
used curly brackets to denote vectors acting on the auxiliary indices, in
order to avoid confusion with the physical spin degrees of freedom). As it
is standard in MPS theory, in order to determine expectation values of
local observables acting on spin 1, it is convenient to define the
following completely positive maps:
\begin{eqnarray*}
 {\cal E}_1(X)&=&\sum_{i_1,j_1} |B^{j_1}) \langle j_1|X|i_1\rangle
(B^{i_1}|\ ,\\
 {\cal E}_n(X)&=&\sum_{i_n} B^{i_n\dagger} X B^{i_n}
\end{eqnarray*}
for $n=2,\ldots,m$, as well as,
\[
 {\cal M}_m={\cal E}_m \circ \ldots {\cal E}_2 \circ{\cal E}_1\ .
\]
We can thus write:
 \begin{equation}
\label{eq:LTQO-top-eq}
 \frac{\langle \Psi_R|O_1|\Psi_R\rangle }
 {\langle \Psi_R|\Psi_R\rangle} =
 \frac{(R|{\cal M}_m(O_1)|R) }
 {(R|{\cal M}_m(\Id)|R)}\ .
 \end{equation}
We will have LTQO for $O_1$ whenever this quantity becomes independent of the vector
$|R)$ in the limit $m\to \infty$ via a rapidly decaying function $f(m)$. 

In order to numerically verify the presence of LTQO using
Eq.~(\ref{eq:LTQO-top-eq}), one finds the maximum and minimum generalized
eigenvalues $\lambda(O_1)$ of the eigenvalue equation
 \begin{equation*}
 {\cal M}_m (O_1) |R) = \lambda(O_1) {\cal M}_m(\Id) |R)\ ,
 \end{equation*}
which can be done using Lanczos methods, together with approximate
contraction of the quasi-1D tensor network.
Defining
 \begin{equation*}
 \epsilon_m = \lambda_{\rm max}(O_1)-\lambda_{\rm min}(O_1),
 \end{equation*}
we then have
 \begin{equation*}
 \Big| \frac{\langle \Psi_R|O_1|\Psi_R\rangle }
 {\langle \Psi_R|\Psi_R\rangle} -
 \frac{\langle \Psi_S|O_1|\Psi_S\rangle }
 {\langle \Psi_S|\Psi_S\rangle}\Big|  \le \epsilon_m
 \end{equation*}
and thus, it only remains to check that $\epsilon_m$ decays sufficiently
fast with $m$.\footnote{Note that $\epsilon_m$ depends on the chosen
observable. If we want to check LTQO for the one-site region, we have to
maximize $\epsilon_m$ among all possible observables $O_1$ with
$\|O_1\|=1$.}

\subsection{Implications of LTQO for PEPS simulations}

If a PEPS possesses LTQO for a certain local observable $O_1$, this implies that in order to compute expectation values of $O_1$, we can choose any boundary condition
$|R)$ we like;  in particular, we can choose $|R)$ to be a product state.
If the boundary is at a distance $m$ from the observable, this implies
that the boundary as seen by the observable is a Matrix Product Operator
(MPO) with bond dimension $D^{2m}$ (obtained by blocking the tensors in
Fig.~\ref{Fig:PEPS} in radial slices). In particular, if
$f(m)=O(e^{-\alpha m})$, then the bond dimension required to compute the
value of $\langle O_1\rangle$ in the thermodynamic limit up to precision
$\epsilon$ scales polynomially in $1/\epsilon$. Thus, LTQO provides a
formal justification of the approximate PEPS contraction scheme in which
the boundary is approximated by an MPO at every
step~\cite{verstraete:2D-dmrg}.

\section{Stability of LTQO}

In this Section, we prove that in the presence of a spectral gap, the LTQO condition for a system can only
disappear when closing the (local) gap. This is important since it allows
us to infer LTQO for a whole neighborhood
of systems rather than only for isolated points in Hamiltonian space.

Given a system $H_X=\sum_{x\in X} h_x$, we say that it has local gap if
there exists a constant $\gamma>0$ such that for all $|X|$ and all
spherical regions $R\subset X$, the Hamiltonian $H_R=\sum_{x\in R} h_x$
has a spectral gap at least $\gamma$ above the ground state energy.

\begin{thm}
\label{thm:ltqo-stable}
Consider a Hamiltonian $H_X=\sum_x h_x$ which is short ranged and
frustration free\footnote{Though we assume it for simplicity, the
hypothesis of frustration freeness can be relaxed. The notion of LTQO for
frustrated Hamiltonians with a low-energy subspace of energy-splitting
$\epsilon$ separated by a uniform gap $\gamma \gg \epsilon$ from the rest
of the spectrum, is well-defined if one assumes the existence of
subregions with low-energy subspaces separated by local-gaps to the rest
of the spectrum. In that case, LTQO is defined with respect to the
projections onto the local, low-energy subspaces with energy below
$\epsilon$, instead of the exact groundstate subspace of frustration-free
Hamiltonians.} (but not necessary translationally invariant), and let
$\epsilon_x\ge 0$ for all $x\in X$. Assume that (i) the system $H_X$ has
LTQO with some superpolynomially decaying function $\hat f(m)$; (ii)~for all
$0\le \delta_x\le \epsilon_x$, the perturbed Hamiltonian
\begin{equation}
\label{eq:app:def-ham}
H^{\vec{\delta}}_X=\sum_x h_x+ \delta_x k_x
\end{equation}
has uniform (in $\delta$) local gap and $k_x$ acts on the same sites
as $h_x$, where we assume $\|h_x\|, \|k_x\|\le 1$ for all $x$. Then, the
perturbed system $H^{\vec{\epsilon}}_X$ has LTQO. 
\end{thm}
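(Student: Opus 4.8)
\section*{Proof proposal}

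The plan is to prove LTQO for $H^{\vec\epsilon}_X$ through its characterization \eqref{cond22d}, by transporting the corresponding bound for the unperturbed system along the quasi-adiabatic continuation (spectral flow) of the regional Hamiltonians. As noted in remark (iv), it suffices to verify \eqref{cond22d} for spherical regions $B$; for such $B$ the region $H_B$ is itself spherical, so the uniform local-gap hypothesis (ii) applies to $H_B$, and, taking $\vec\delta = s\,\vec\epsilon$ with $s\in[0,1]$ (which lies in the box $0\le\delta_x\le\epsilon_x$), also to the whole interpolating family $H_B(s):=\sum_{x\in B}(h_x+s\,\epsilon_x k_x)$. In particular $s=0$ (allowed since $\delta_x=0$ is admissible) gives that $H_B$ is gapped, so the flow starts from a genuine gapped point. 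Throughout, $P_B$ and $P^{\vec\epsilon}_B$ denote the ground-space projectors of $H_B$ and $H^{\vec\epsilon}_B$, and for a region $A\subset B$ and $r>0$ we write $A_r$ for the set of sites within distance $r$ of $A$.

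Fix a spherical $B$, a region $A\subset B$ with $\mathrm{dist}(A,\partial B)=m$, and an observable $O_a$ on $A$. Since $H_B(s)$ has a spectral gap $\ge\gamma$ above its ground energy, uniformly in $s\in[0,1]$, quasi-adiabatic continuation yields a unitary $U_B$ such that $P^{\vec\epsilon}_B = U_B\,P_B\,U_B^\dagger$; $U_B$ is generated by a quasi-local generator whose tails decay superpolynomially, with constants depending only on $\gamma$, the interaction range, and the (bounded) norms $\|h_x\|,\|k_x\|\le1$ — hence independent of $|B|$ and of $|X|$. Because the evolution ``time'' of the flow is $O(1)$, a Lieb--Robinson estimate for this generator produces an operator $\tilde O_a$ supported on $A_{m/2}\subset B$ with $\|U_B^\dagger O_a U_B - \tilde O_a\|\le\|O_a\|\,g(m)$ for some superpolynomially decaying $g$; note $\|\tilde O_a\|\le\|O_a\|(1+g(m))$ and $\mathrm{dist}(A_{m/2},\partial B)\ge m/2$.

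Now substitute $U_B^\dagger O_a U_B=\tilde O_a+\Delta$, $\|\Delta\|\le\|O_a\|g(m)$, into $P^{\vec\epsilon}_B O_a P^{\vec\epsilon}_B=U_B\big(P_B\,U_B^\dagger O_a U_B\,P_B\big)U_B^\dagger$, obtaining $P^{\vec\epsilon}_B O_a P^{\vec\epsilon}_B = U_B\big(P_B\tilde O_a P_B\big)U_B^\dagger + E_1$ with $\|E_1\|\le\|O_a\|g(m)$. Since $\tilde O_a$ is supported on $A_{m/2}$ at distance $\ge m/2$ from $\partial B$, LTQO of the unperturbed \emph{system} (hypothesis (i), with its region-independent $\hat f$) gives $P_B\tilde O_a P_B = c(\tilde O_a)P_B + E_2$, $\|E_2\|\le\|\tilde O_a\|\,\hat f(m/2)$, and conjugating $c(\tilde O_a)P_B$ by $U_B$ returns $c(\tilde O_a)P^{\vec\epsilon}_B$. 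Finally, $\mathrm{tr}(P^{\vec\epsilon}_B O_a)=\mathrm{tr}\big(P_B\,U_B^\dagger O_a U_B\big)=\mathrm{tr}(P_B\tilde O_a)+O(\|O_a\|g(m))\,\mathrm{tr}(P_B)$ and $\mathrm{tr}(P^{\vec\epsilon}_B)=\mathrm{tr}(P_B)$, so $c^{\vec\epsilon}(O_a):=\mathrm{tr}(P^{\vec\epsilon}_B O_a)/\mathrm{tr}(P^{\vec\epsilon}_B)$ differs from $c(\tilde O_a)$ by at most $\|O_a\|g(m)$. Collecting the three errors yields $\big\|P^{\vec\epsilon}_B O_a P^{\vec\epsilon}_B - c^{\vec\epsilon}(O_a)\,P^{\vec\epsilon}_B\big\|\le\|O_a\|\,f_A(m)$ with $f_A(m)=2g(m)+(1+g(m))\hat f(m/2)$, which is superpolynomially decaying and, being assembled from $\hat f$ and the region-independent Lieb--Robinson tail, independent of $A$ and of $|X|$. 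By \eqref{cond22d} this is exactly LTQO for $H^{\vec\epsilon}_X$.

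The main obstacle is the second paragraph: establishing that conjugation by the spectral-flow unitary $U_B$ only smears $O_a$ over a distance with a superpolynomially light tail, \emph{uniformly over all spherical regions} $B$, using merely the uniform \emph{local} gap rather than a global one. This is precisely where the full strength of hypothesis (ii) is needed — it guarantees that the quasi-adiabatic generator of each $H_B(s)$ is well defined along the entire path and has quasi-locality constants that do not deteriorate as $B$ (and $|X|$) grow. Once this locality statement is in hand, the algebraic substitution and the bookkeeping of the error terms in the third paragraph are routine.
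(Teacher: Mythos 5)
Your proposal is correct in its logic and bookkeeping, but it takes a genuinely different route from the paper at the decisive step. The paper never invokes a Lieb--Robinson bound for the spectral flow itself. Instead it splits the perturbation into an inner part (supported on the observable's neighborhood $A\cup B$, in the paper's ring notation) and an outer part (supported on the far rings $C\cup D$), and applies Lemma~\ref{lemma:bachmann} twice: for the inner part the flow unitary $V_{ABC}$ is \emph{localized} near $A$, so the conjugated observable $V_{ABC}^\dagger O_A V_{ABC}$ is exactly (not approximately) supported on a region still deep inside $Y$, where system-wide LTQO applies; for the outer part the flow unitary $V_{BCD}$ \emph{commutes exactly} with $O_A$, so conjugation is free. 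You instead run the flow for the full perturbation on all of $B$ in one step, obtaining a unitary $U_B$ that is not localized, and then rely on the quasi-locality of the flow automorphism --- i.e., that $U_B^\dagger O_a U_B$ is superpolynomially well approximated by an operator on $A_{m/2}$. That statement is true and is available in the same reference (Bachmann et al.), with constants controlled by the uniform local gap exactly as you say, so your argument closes; but it is a strictly stronger input than the paper's Lemma~\ref{lemma:bachmann}, which makes no claim about how conjugation spreads observables. You correctly flag this as the main obstacle; the paper's concentric-ring construction is precisely the device that avoids ever needing it, at the price of a slightly more intricate decomposition (and the factor $m/3$ rather than $m/2$). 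Both proofs use system-wide LTQO in the same essential way, to handle a deformed observable supported on an enlarged region, and your remaining error estimates are routine and correct.
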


It is crucial for the proof that we assume LTQO for the whole system, and not just for a particular region or observable.

To prove the result, we will use the following result from
Ref.~\cite{bachmann:quasi-adiabatic}. 

\begin{lem}(Theorem 3.4 in \cite{bachmann:quasi-adiabatic}) 
\label{lemma:bachmann}
Let $Y$ be any region of a system $X$ and $Y_R$ the region enlarged by
sites at distance $\le R$ from $Y$.  Consider a smooth path of
Hamiltonians on $X$, $H(s)=H_0+\Phi(s)$, $0\le s\le 1$, with uniformly
bounded local terms, bounded derivatives, and a uniform lower bound on the
spectral gap, and for which $\Phi(s)$ is supported on $Y$.  Let $P_0$ and
$P_1$ be the projector onto the ground space of $H(0)$ and $H(1)$,
respectively. Then, there exists a unitary operation $V_R$ acting on $Y_R$
such that (in operator norm) 
\[
\|P_0-V_RP_1V_R^\dagger\| \le \tilde f(R)\ ,
\]
where $\tilde f(m)$ decays superpolynomially.
\end{lem}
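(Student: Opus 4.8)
The plan is to prove the statement by constructing a quasi-adiabatic continuation (spectral flow) along the path $H(s)$ and then showing that this flow can be localized to $Y_R$ up to superpolynomially small error. The central object is a one-parameter family of unitaries $U(s)$, $0\le s\le 1$, solving $\partial_s U(s)=i\,D(s)\,U(s)$ with $U(0)=\Id$, whose generator is the self-adjoint operator
\[
 D(s)=\int_{-\infty}^{\infty} dt\; W_\gamma(t)\; e^{iH(s)t}\,\dot H(s)\, e^{-iH(s)t}\ ,
\]
with $\dot H(s)=\partial_s\Phi(s)$ and $W_\gamma$ a real, odd filter function to be chosen. Once $U:=U(1)$ is obtained, I will have $P_1=U\,P_0\,U^\dagger$ exactly, hence $P_0=U^\dagger P_1\,U$, and the lemma will follow by replacing $U^\dagger$ with a unitary $V_R$ supported on $Y_R$.

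First I would establish the \emph{exact} intertwining $\partial_s P(s)=i[D(s),P(s)]$, where $P(s)$ is the ground-state projector of $H(s)$. Evaluating $D(s)$ in the energy eigenbasis, the $t$-integral produces the Fourier transform $\hat W_\gamma(\omega)=\int dt\,W_\gamma(t)e^{i\omega t}$ at the energy differences $\omega$; choosing $W_\gamma$ so that $\hat W_\gamma(\omega)$ equals $i/\omega$ for all $|\omega|\ge\gamma$ (up to conventions) makes $i[D(s),P(s)]$ reproduce the first-order perturbative expression for $\partial_s P(s)$ on the off-diagonal blocks while annihilating the ground-space block, using only the \emph{uniform} gap $\gamma$. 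Inside the gap one interpolates $\hat W_\gamma$ smoothly, so that $\hat W_\gamma$ is globally $C^\infty$ and bounded; this forces $W_\gamma(t)$ to decay faster than any inverse polynomial in $|t|$, which is the source of the eventual superpolynomial decay of $\tilde f$. Integrating $\partial_s P(s)=i[D(s),P(s)]$ then yields $P(s)=U(s)P_0U(s)^\dagger$ exactly.

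Next I would localize the flow. Because $\Phi(s)$, and hence $\dot H(s)=\dot\Phi(s)$, is supported on $Y$, the integrand $e^{iH(s)t}\dot H(s)e^{-iH(s)t}$ is the Heisenberg evolution of a $Y$-supported operator under a uniformly bounded, short-ranged Hamiltonian. A Lieb--Robinson bound then guarantees that this evolved operator is approximated, with error decaying rapidly in $R-v|t|$, by its conditional expectation onto $Y_R$ (the enlargement of $Y$ by the Lieb--Robinson distance $v|t|$). Truncating the generator accordingly to a self-adjoint $D_R(s)$ supported on $Y_R$ and integrating $\partial_s V(s)=iD_R(s)V(s)$ produces a unitary $V:=V(1)$ acting only on $Y_R$; setting $V_R:=V^\dagger$ gives the localized operator of the statement. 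Since $P_0=U^\dagger P_1U$, one has $\|P_0-V_RP_1V_R^\dagger\|=\|U^\dagger P_1U-V^\dagger P_1V\|\le 2\|U-V\|$, so it suffices to bound $\|U-V\|$, which by a Duhamel (Gronwall) estimate for two flows with equal initial data is controlled by $\int_0^1\|D(s)-D_R(s)\|\,ds$.

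The main obstacle is the quantitative locality estimate giving superpolynomial decay of $\tilde f(R)$, which requires balancing the linear Lieb--Robinson light cone against the temporal filter. I would bound $\|D(s)-D_R(s)\|$ by splitting the $t$-integral at $|t|\sim R/v$: for $|t|\lesssim R/v$ the support has not reached $\partial Y_R$, so the Lieb--Robinson tail in $R-v|t|$ is small and $\int|W_\gamma|$ is finite; for $|t|\gtrsim R/v$ the evolved operator may have spread beyond $Y_R$, but here the superpolynomial decay of $W_\gamma(t)$ makes $\int_{|t|\gtrsim R/v}|W_\gamma(t)|\,dt$ superpolynomially small in $R$. The delicate point is that Hastings' filter $W_\gamma$ is only subexponentially (not exponentially) decaying, so the second regime is the binding one; nevertheless it still yields a truncation error, and hence $\|U-V\|$ and $\tilde f(R)$, that decays faster than every polynomial in $R$, completing the argument.
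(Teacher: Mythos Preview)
The paper does not give its own proof of this lemma; it is quoted verbatim as Theorem~3.4 of \cite{bachmann:quasi-adiabatic} and used as a black box in the proof of Theorem~\ref{thm:ltqo-stable}. Your sketch is a faithful outline of exactly that spectral-flow (quasi-adiabatic continuation) argument: construct the generator $D(s)$ by smearing $\dot H(s)$ with a filter $W_\gamma$ whose Fourier transform equals $-1/\omega$ outside the gap, obtain the exact intertwining $P(s)=U(s)P_0U(s)^\dagger$, and then localize $U$ to $Y_R$ via Lieb--Robinson bounds, with the superpolynomial decay of $W_\gamma$ (forced by the smooth cutoff at $|\omega|=\gamma$) controlling the tail of the time integral. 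This is the same route as the cited reference, so there is nothing to contrast.
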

\noindent 
Note that it will be decisive that the Lemma makes no
assumption about the rank of $P_0$ and $P_1$.

\noindent\emph{Proof of Theorem~\ref{thm:ltqo-stable}.}
In order to prove LTQO for the deformed system (\ref{eq:app:def-ham}), we
consider a spherical region $A$ and subsequently add concentric rings $B,
C, D$, such that their boundaries are separated by $\frac{m}{3}$. We
denote the union of the regions $A,B,C,D$ by $Y$, and the projector onto
the ground space of the original Hamiltonian $H_X$ in region $Y$ by $P$.

Let $P^{AB}$ denote the projection onto the ground space of Hamiltonian
\[
H_{AB}^{\vec\delta} = \sum_{x\in X} h_x + \sum_{x\in{A,B}} \delta_x k_x\ ,
\]
i.e., where the perturbation only acts in regions $A$ and $B$.  
Since we assume a local gap in the theorem, Lemma~\ref{lemma:bachmann}
implies the existence of a unitary $V_{ABC}$ (supported on regions
$A,B,C$) such that
\begin{equation}
\label{eq:app:lemma10-abc}
\|P^{AB}-V_{ABC}PV_{ABC}^\dagger\|\le \tilde{f}(\frac{m}{3})\ .
\end{equation}
Using successive triangle inequalities, the submultiplicativity and
unitary invariance of the operator norm, and
Eq.~(\ref{eq:app:lemma10-abc}), we find (with 
$c=\frac{\tr{PV_{ABC}^\dagger O_A V_{ABC}}}{\tr{P}}$)
\begin{widetext}
\[
\|P^{AB}O_AP^{AB}-cP^{AB}\|\le
\|V_{ABC}PV_{ABC}^\dagger O_A
V_{ABC}PV_{ABC}^\dagger-cV_{ABC}PV_{ABC}^\dagger\|+
3\|O_A\|\tilde f(\frac{m}{3})\
.
\]
The left part can be further bounded using the LTQO condition for
$H_X$ which says that $\|PO_{ABC}P-c'P\|\le \hat
f(\frac{m}{3})\|O_{ABC}\|$ for all $O_{ABC}$ supported in the union of
regions $A,B,C$ (in particular for $O_{ABC}=V_{ABC}^\dagger O_A V_{ABC}$),
where $c'=\tfrac{\tr\,{PO_{ABC}}}{{\tr\,{P}}}\equiv c$, which yields
\[
\|P^{AB}O_AP^{AB}-cP^{AB}\|\le
\|O_A\|\hat f(\frac{m}{3})
+3\|O_A\|\tilde f(\frac{m}{3})
\le 4\|O_A\|f(\frac{m}{3})\ ,
\]
where $f$ is a superpolynomially decaying upper bound to $\tilde f$ and
$\hat f$.

Another application of Lemma~\ref{lemma:bachmann} proves the existence of
a unitary $V_{BCD}$ such that
\[
\|P^{ABCD}-V_{BCD}P^{AB}V_{BCD}^\dagger\|\le f(\frac{m}{3})\ ,
\]
where $P^{ABCD}$ is the projector onto the ground space of
$H^{\vec\delta}_X$.
Again,
\[
\|P^{ABCD}O_AP^{ABCD}-cP^{ABCD}\|\le \|V_{BCD}P^{AB}V_{BCD}^\dagger
O_A V_{BCD}P^{AB}V_{BCD}^\dagger- c V_{BCD}P^{AB}V_{BCD}^\dagger\|+
3\|O_A\|f(\frac{m}{3})\ .
\]
Since in the first term on the r.h.s., $O_A$ commutes with $V_{BCD}$,
$V_{BCD}^\dagger V_{BCD}$ cancels, and we find (using unitary invariance of
the norm)
\[
\|P^{ABCD}O_AP^{ABCD}-cP^{ABCD}\|\le \| P^{AB}O_AP^{AB}-cP^{AB}\|+
3\|O_A\|f(\frac{m}{3})\le 7\|O_A\|f(\frac{m}{3})\ ,
\]
\end{widetext}
with $c$ defined as above. Using the characterization given in Corollary 3
of Ref.~\cite{michalakis:local-tqo-ffree}, one can see that the actual
value of the constant $c$ plays no role in the definition of LTQO, and
therefore, we have shown LTQO for the perturbed system.
\hspace*{\fill}$\square$

This result can be used to construct  new examples of
systems verifying LTQO. For instance, it is shown in
\cite[Appendix E]{schuch:mps-phases} that if we start with a system with
LTQO and made out of commuting terms (such as the toric code or quantum
double models), small perturbations of the type (\ref{eq:pert-Ham}) verify
the hypothesis of the theorem. In this way, we can give the first 2D
examples of systems with non-commuting Hamiltonians satisfying LTQO.

\section{Conclusions}

In this paper, we have analyzed the stability of a PEPS under physical
perturbations to the local tensor which defines it. We have shown how
restricting the LTQO condition \cite{michalakis:local-tqo-ffree} to
particular observables and regions gives a checkable criterion which makes
this assignment between the PEPS and the local tensor robust.  This
robustness translates then to any situation in which this assignment is
exploited, with examples ranging from classifying quantum phases in
locally interacting spin systems \cite{schuch:mps-phases, schuch:peps-sym}
to approximating numerically ground states of 2D local
Hamiltonians~\cite{jordan:iPEPS}.

\subsection*{Acknowledgments}

We are indebted to Frank Verstraete for many useful comments on this work.
JIC acknowledges support by the EU project AQUTE, the DFG SFB 631 and
Exzellenzcluster NIM, and Catalunya Caixa. SM acknowledges funding
provided by the Institute for Quantum Information and Matter, an NSF
Physics Frontiers Center with support of the Gordon and Betty Moore
Foundation through grant GBMF1250 and by the AFOSR grant FA8750-12-2-0308.
DPG acknowledges support from Spanish grants MTM2011-26912 and QUITEMAD,
and European CHIST-ERA project CQC. NS acknowledges support by the
Alexander von Humboldt foundation.

\appendix

\section{LTQO for injective MPS}

In this appendix, we give a formal proof of the following theorem:

\begin{thm}\label{thm:MPS-LTQO}
Parent Hamiltonians of translationally invariant, injective MPS satisfy LTQO.
\end{thm}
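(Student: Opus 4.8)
The plan is to exploit the canonical form of injective MPS together with the standard transfer-operator analysis, showing that the dependence on the boundary condition decays exponentially in the distance $m$. Recall that for an injective MPS there is a blocking length $L_0$ such that, after blocking $L_0$ sites, the single transfer operator $\E(X)=\sum_i A^{i\dagger}XA^i$ has a nondegenerate largest eigenvalue, which we may normalize to $1$, with unique positive fixed points $\E(\rho)=\rho$ and $\E^*(\1)=\1$ (the latter after a gauge transformation), and a spectral gap: all other eigenvalues have modulus $\le \mu<1$. First I would set up the geometry exactly as in Section~\ref{sec:LTQO-and-peps}: block the lattice so that $A$ is a single site, take the region $A$ of Definition~\ref{Cond22} to be one site, $B$ to be an interval of $m$ sites to each side (in the 1D case $X$ is a chain), and write a general $\ket{\Psi_b}\in\mathcal S_B$ as a contraction of the MPS tensors on $B$ with arbitrary boundary vectors $|R)$, $|L)$ on the two open auxiliary indices, exactly as in Eq.~\eqref{eq:LTQO-top-eq}.

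The key step is then to estimate the generalized eigenvalue quantity
\[
\frac{\langle\Psi_R|O_a|\Psi_R\rangle}{\langle\Psi_R|\Psi_R\rangle}
=\frac{(L|\,\E^{m}\!\big(\mc O\,\E^{m'}(|R)(R|)\big)\,|L)}%
      {(L|\,\E^{m}\!\big(\,\E^{m'}(|R)(R|)\big)\,|L)}\ ,
\]
where $\mc O$ is the one-site map dressed with $O_a$ and $m,m'$ are the distances from the site to the two boundaries. Using the spectral decomposition $\E = |\rho)(\1| + \mc N$ with $\|\mc N^k\|\le C\mu^k$, each application of $\E^{m}$ replaces the argument by its (positive, normalized) trace against $\1$ times $\rho$, up to an error of size $O(\mu^{m})$ times the norm of the argument. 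Hence both numerator and denominator equal the corresponding "infinite-chain" expressions — which are exactly $\langle\Psi_x|O_a|\Psi_x\rangle$ and $\langle\Psi_x|\Psi_x\rangle$ for the unique bulk state — up to multiplicative $(1+O(\mu^{m}))$ corrections, with constants independent of $|X|$, of $O_a$ beyond the factor $\|O_a\|$, and of the boundary vectors. Dividing and using that the denominator is bounded below by a positive constant (again by injectivity/positivity of $\rho$), I obtain
\[
\Big|\tfrac{\langle\Psi_R|O_a|\Psi_R\rangle}{\langle\Psi_R|\Psi_R\rangle}
-\tfrac{\langle\Psi_x|O_a|\Psi_x\rangle}{\langle\Psi_x|\Psi_x\rangle}\Big|
\le \|O_a\|\,f_A(m),\qquad f_A(m)=O(\mu^{m}),
\]
which is precisely the LTQO bound \eqref{cond22}, with $f$ exponentially (hence superpolynomially) decaying and independent of $A$. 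Since this holds for every region and observable with the same $f$, the system satisfies LTQO.

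The main obstacle is not the exponential decay itself but the bookkeeping needed to make all constants uniform in the system size $|X|$ and in the boundary vectors $|R),|L)$: one must be careful that the $O(\mu^m)$ error, once divided by a denominator that is itself only approximately $(L|\rho)(\1|\cdots)|L)$, does not pick up a hidden dependence on $\|R\|/\langle\Psi_R|\Psi_R\rangle$. This is handled by factoring the (strictly positive) normalization out first and noting that the relevant ratios are scale-invariant in $|R)$ and $|L)$, so only the \emph{direction} of the boundary vectors matters and the fixed-point projection $|\rho)(\1|$ washes it out; the positivity of $\rho$ (from injectivity) is what guarantees a size-independent lower bound on the denominator. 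A secondary technical point is the reduction from general $\ket{\Psi_b}$ to the "product boundary" form and the passage to higher-dimensional blocking if one wants the statement for the lattice rather than the chain, but for translationally invariant injective MPS the chain case is the substantive one and the blocking is routine.
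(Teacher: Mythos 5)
Your proposal is correct and takes essentially the same route as the paper: both arguments rest on the canonical form of the injective MPS and the spectral gap of the transfer operator $\E$, with the full-rank fixed point providing the uniform, boundary-independent lower bound on the denominator; the paper simply outsources the exponential estimate you derive from the decomposition $\E=|\rho)(\1|+\mc{N}$ to Lemma 5.2.(2) of Fannes--Nachtergaele--Werner and then handles the normalization by the same triangle-inequality step you describe. The one bookkeeping point worth making explicit is that a general $\ket{\Psi_b}\in\mathcal{S}_B$ is $K_L(X)$ for an \emph{arbitrary} boundary matrix $X$ (not a rank-one $|R)(L|$), but your estimate goes through verbatim with $|R)(R|$ and $|L)(L|$ replaced by the corresponding sandwiching by $X$ and $X^\dagger$, since the error and the denominator are both controlled by $\|X\|_2^2$.
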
 

Among MPS experts the above result has been known for some time, but we think that a rigorous proof would illuminate some of the key aspects of LTQO as it relates to the concept of {\it insensitivity of the bulk to boundary conditions}. Before proving the result, we recall the requisite basic machinery from the MPS literature. In particular, we note that a translationally invariant MPS is equivalent to a 1D PEPS. It is, hence, given by a collection of $D\times D$ matrices $(A_i)_{i=1}^d$, with $d$ the local physical dimension. Since contraction in this case reduces to matrix multiplication, for each chain with $N$ spins, the MPS reads:
\begin{equation*}
|\psi\rangle = \sum_{i_1,\ldots,i_N=1}^d \tr(A_{i_1}\cdots A_{i_N})|i_1\cdots i_N\rangle
\end{equation*}

An MPS is called injective if there exists a length $R \ge 1$, such that
the map $$K_R(X): X\mapsto \sum_{i_1,\ldots,i_R=1}^d \tr(XA_{i_1}\cdots
A_{i_R})|i_1\cdots i_R\rangle$$ is injective. The minimal such $R$ is
called the injectivity length. By the quantum Wielandt inequality of
\cite{sanz:wielandt}, 
the injectivity length is known to be upper-bounded by $(D^2-d+1)D^2$.
Hence by blocking at most $(D^2-d+1)D^2$ spins we can assume without loss
of generality that $R=1$. Injective MPS are the unique ground states of
their parent Hamiltonians~\cite{fannes:FCS,perez-garcia:mps-reps}, which have a uniform gap above the ground state \cite{fannes:FCS}. Moreover, parent Hamiltonians of injective MPS also verify the {\it local-gap condition} of \cite{michalakis:local-tqo-ffree}. That is, for any region of $L$ consecutive spins, the Hamiltonian $H_L=\sum_{i=1}^{L-1}h_{i,i+1}$, whose groundstate subspace is $$\ker(H_L)=\{K_L(X)| X\in M_{D\times D}\},$$
has a uniform (in $L$) spectral gap~\cite{fannes:FCS}. This allows us to conclude from Theorem \ref{thm:MPS-LTQO} and the main result in \cite{michalakis:local-tqo-ffree} that:

\begin{cor}
Parent Hamiltonians of translationally invariant, injective MPS have a stable spectral gap against arbitrary quasi-local perturbations.
\end{cor}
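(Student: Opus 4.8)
\emph{Proof strategy.}\ The plan is to reduce the statement to the spectral gap of the MPS transfer operator together with the strict positivity of its fixed points. As recalled above, by blocking $(D^2-d+1)D^2$ sites (quantum Wielandt inequality~\cite{sanz:wielandt}) we may assume the injectivity length is $R=1$, i.e.\ $\mathrm{span}\{A_i\}_{i=1}^{d}=M_{D\times D}$, which makes the transfer operator $\E(Y)=\sum_{i=1}^d A_i Y A_i^\dagger$ a \emph{primitive} completely positive map. By Perron--Frobenius for primitive CP maps, after rescaling the $A_i$ by a positive constant (which does not affect normalized expectation values) $1$ is the unique eigenvalue of largest modulus of $\E$ and it is nondegenerate, with strictly positive right fixed point $\E(\rho)=\rho$ and strictly positive left fixed point $\sum_i A_i^\dagger\Lambda A_i=\Lambda$, normalized by $\tr(\Lambda\rho)=1$, while the rest of the spectrum lies in a disk of radius $\mu<1$. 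Passing to the gauge $A_i\mapsto\rho^{-1/2}A_i\rho^{1/2}$ we may moreover take $\rho=\1$, so $\E$ is unital and $\|A_i\|\le1$; then $\Lambda$ remains positive definite. Writing $\hat{\E}=\sum_i A_i\otimes\bar{A}_i$ for the transfer operator as a $D^2\times D^2$ matrix and $(\Lambda|$ and $|\rho)$ for its left/right eigenvectors under vectorization, this gives $\hat{\E}^{\,m}=|\rho)(\Lambda|+N_m$ with $\|N_m\|\le C_0\mu^m$. I also use the standard fact (immediate from $R=1$) that the ground space of the parent Hamiltonian on an interval of $L$ sites is exactly $\ker(H_L)=\{K_L(X):X\in M_{D\times D}\}$; these are the states $\ket{\Psi_b}\in\mc S_B$ of Definition~\ref{Cond22}.

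Next I would write the relevant expectation values in transfer-operator form. Let $A$ be an interval carrying $O_a$, placed at the centre of an interval $B$ of $|A|+2m$ sites; every $\ket{\Psi_b}\in\mc S_B$ equals $\ket{\Psi_X}=\sum_{\vec{n}}\tr(XA_{n_1}\cdots A_{n_{|A|+2m}})\ket{\vec{n}}$ for some $X\in M_{D\times D}$, and a direct contraction gives
\begin{equation*}
\langle\Psi_X|\Psi_X\rangle=\tr\bigl[(X\otimes\bar{X})\,\hat{\E}^{\,|A|+2m}\bigr],\qquad
\langle\Psi_X|O_a|\Psi_X\rangle=\tr\bigl[(X\otimes\bar{X})\,\hat{\E}^{\,m}\,\hat{\E}_{O_a}\,\hat{\E}^{\,m}\bigr],
\end{equation*}
where $\hat{\E}_{O_a}=\sum_{\vec{i},\vec{j}}\langle\vec{j}|O_a|\vec{i}\rangle\,A_{\vec{i}}\otimes\bar{A}_{\vec{j}}$ is the $|A|$-site transfer operator with $O_a$ inserted. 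The key estimate is that $\|\hat{\E}_{O_a}\|\le c\,\|O_a\|$ with $c$ depending only on $D$ and \emph{not} on $|A|$: splitting $O_a$ into positive parts, the corresponding $\hat{\E}_{O_\pm}$ are completely positive and, since $\E$ is unital, $\hat{\E}_{O_\pm}(\1)\le\|O_\pm\|\,\E^{|A|}(\1)=\|O_\pm\|\,\1$. Now substitute $\hat{\E}^{\,m}=|\rho)(\Lambda|+N_m$ into both expressions and expand: the leading terms are $(\Lambda|\hat{\E}_{O_a}|\rho)\cdot(\Lambda|(X\otimes\bar{X})|\rho)$ for the numerator and $(\Lambda|(X\otimes\bar{X})|\rho)$ for the denominator, where $(\Lambda|(X\otimes\bar{X})|\rho)=\tr(X^\dagger\Lambda X)$, while the remaining terms all carry a factor $\|N_m\|\le C_0\mu^m$ (times $\|\hat{\E}_{O_a}\|$ and $\|X\otimes\bar{X}\|$). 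The common $X$-dependent factor $\tr(X^\dagger\Lambda X)$ cancels in the quotient, and, after normalizing $\|X\|_F=1$, it is bounded below by $\lambda_{\min}(\Lambda)>0$ uniformly in $X$ because $\Lambda$ is strictly positive. Dividing and using this lower bound on the denominator yields, for $m$ past a fixed threshold,
\begin{equation*}
\Bigl|\frac{\langle\Psi_X|O_a|\Psi_X\rangle}{\langle\Psi_X|\Psi_X\rangle}-(\Lambda|\hat{\E}_{O_a}|\rho)\Bigr|\le C\,\|O_a\|\,\mu^m,
\end{equation*}
with $C$ depending only on $D$ and not on $X$, $O_a$, or $|A|$ (for small $m$ the trivial bound $2\|O_a\|$ is absorbed by enlarging $C$).

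Since $(\Lambda|\hat{\E}_{O_a}|\rho)$ is a single number independent of $X$, the triangle inequality gives $\bigl|\tfrac{\langle\Psi_b'|O_a|\Psi_b'\rangle}{\langle\Psi_b'|\Psi_b'\rangle}-\tfrac{\langle\Psi_b|O_a|\Psi_b\rangle}{\langle\Psi_b|\Psi_b\rangle}\bigr|\le2C\|O_a\|\mu^m$ for all $\ket{\Psi_b},\ket{\Psi_b'}\in\mc S_B$, which is condition~(\ref{cond22c}); by the equivalences established above (using that for parent Hamiltonians the reduced state of any $\ket{\Psi_x}\in\mc S_X$ on $B$ is supported in $\mc S_B$), this yields condition~(\ref{cond22}). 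Hence every region $A$ satisfies LTQO with the same $f(m)=2C\mu^m$, which decays exponentially and thus superpolynomially, so the \emph{system} satisfies LTQO; this proves Theorem~\ref{thm:MPS-LTQO}, and, together with the uniform local gap of \cite{fannes:FCS} and the main result of \cite{michalakis:local-tqo-ffree}, the Corollary. The step I expect to demand the most care is the highlighted one: checking that the boundary matrix $X$ factors identically out of numerator and denominator at leading order and that the common factor $\tr(X^\dagger\Lambda X)$ is \emph{uniformly} bounded away from zero. This is exactly where primitivity of $\E$---equivalently, the spectral gap of $\hat{\E}$ and strict positivity of the fixed points---is indispensable; the rest is routine transfer-operator bookkeeping (including the passage to the unital gauge needed to make the constants $|A|$-independent).
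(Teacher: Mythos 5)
Your proof is correct and follows the same route as the paper: establish LTQO for the whole system (Theorem~\ref{thm:MPS-LTQO}), combine it with the uniform local gap of parent Hamiltonians of injective MPS~\cite{fannes:FCS}, and invoke the main stability theorem of~\cite{michalakis:local-tqo-ffree}. The only difference is that where the paper obtains the key exponential estimate by citing Lemma~5.2.(2) of~\cite{fannes:FCS}, you derive it self-containedly from the Perron--Frobenius spectral decomposition of the transfer operator in the unital gauge (taking care that the constants are independent of $|A|$ and of the boundary matrix $X$), which is essentially the content of that cited lemma.
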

Note that the above corollary combined with the quasi-adiabatic
continuation
technique~\cite{hastings:quasi-adiabatic,bachmann:quasi-adiabatic} implies
stability of the groundstate subspace with respect to properties of local
observables.

To show Theorem \ref{thm:MPS-LTQO}, we will rely on the canonical form of
MPS stated in \cite{perez-garcia:mps-reps}. Any injective MPS can be
represented by a set of $D \times D$ matrices $\{A_i\}_{i=1}^d$, such that
the completely-positive and trace-preserving map $\E$ given by
$\E(X)=\sum_{i=1}^d A_i XA_i^\dagger$, has a non-degenerate eigenvalue of
modulus $1$ corresponding to $\Lambda$, where $\Lambda$ is a diagonal,
positive, full-rank matrix with $\tr(\Lambda)=1$. If we denote the second
largest (in magnitude) eigenvalue of $\E$ as $\lambda_2$, then it follows
that the map $\E$ has a spectral gap given by $1-|\lambda_2|$. 

\begin{proof} [Proof of Theorem \ref{thm:MPS-LTQO}]
We consider a region $B$ with spins $1,\ldots, 2m+l$ and region $A$ with spins $m+1,\ldots, m+l$ as well as an unnormalized ground state of $H_B$ given by $X$:
$$|\psi_X\rangle=\sum_{i_1,\ldots,i_{2m+l}=1}^d \tr(XA_{i_1}\cdots A_{i_{2m+l}})|i_1\cdots i_{2m+l}\rangle\; .$$
To show LTQO it is enough to prove that for any observable $O_A$ acting on region $A$:
$$\left| \frac{\langle \psi_X|O_A|\psi_X\rangle}{\langle \psi_X|\psi_X\rangle}-\tr(\E_{O_A}(\Lambda))\right| \le \|O_A\| f(m)$$
with $f(m)$ exponentially decaying in $m$ and
\begin{widetext}
$$\E_{O_A}(X)=\sum_{i_{m+1},\ldots,i_{m+l},j_{m+1},\ldots,j_{m+l}=1}^d \langle  j_{m+1}\cdots j_{m+l}|O_A|i_{m+1}\cdots i_{m+l}\rangle A_{i_{m+1}}\cdots A_{i_{m+l}}XA_{j_{m+l}}^\dagger \cdots A_{j_{m+1}}^\dagger\; .$$
Set $g(O_A)=\left| \frac{\langle \psi_X|O_A|\psi_X\rangle}{\tr(XX^\dagger \Lambda)}-\tr(\E_{O_A}(\Lambda))\right|$. It is shown in \cite[Lemma 5.2.(2)]{fannes:FCS} that $g(O_A)\le \|O_A\|f(m)$ with $f$ exponentially decaying with $m$. Then, 
\begin{align*}\left| \frac{\langle \psi_X|O_A|\psi_X\rangle}{\langle \psi_X|\psi_X\rangle}-\tr(\E_{O_A}(\Lambda))\right|&\le
\left| \frac{\langle \psi_X|O_A|\psi_X\rangle}{\langle \psi_X|\psi_X\rangle}-\frac{\langle \psi_X|O_A|\psi_X\rangle}{\tr(XX^\dagger \Lambda)}\right|
+ \left| \frac{\langle \psi_X|O_A|\psi_X\rangle}{\tr(XX^\dagger \Lambda)}-\tr(\E_{O_A}(\Lambda))\right|\\
&\le \|O_A\|g(\1)+g(O_A)\le 2\|O_A\|f(m),
\end{align*}
as desired.  
\end{widetext}
\end{proof}

\end{document}